%% file: dec51-no_line_numbers.tex
\documentclass[11pt]{article}

\usepackage{amsmath, amssymb, latexsym, amsthm}
\usepackage{mathabx}
\usepackage{bm}
\usepackage{wasysym}
\usepackage{fullpage}
\usepackage{color}
\usepackage{tikz}
\usepackage{graphicx}
\usepackage{stmaryrd,amsbsy,enumerate}
\usepackage{hyperref}

\usepackage{mathscinet}
\usetikzlibrary{calc}
\usepackage{cancel}
\usepackage{xspace}
\usepackage[mathlines]{lineno}
\usepackage{url}
\usepackage{doi}
\usepackage{longtable}
\usepackage[obeyFinal,colorinlistoftodos,textsize=footnotesize]{todonotes}
\usepackage{extarrows}
\usepackage[affil-it,auth-sc]{authblk}
\usepackage{enumitem}
\usepackage{mathtools,amsmath}
\usepackage{array}
\usepackage[margin=10pt,font=small,labelfont=sc,labelsep=period]{caption}
\usepackage{soul}
\usepackage{relsize}

\def\real{\mathbb{R}}
\def\H#1#2#3{{{#1}[#2;#3]}}

\def\ww{{\cal W}}
\def\vv{\mathcal V}
\def\zz{\mathcal Z}
\def\ora#1{{\overrightarrow{#1}}}
\def\set#1{{\text{\rm set}(#1)}}
\def\kl{{\text{\rm cl}}}
\def\kint#1#2{{\text{\rm sh}(#1,#2)}}
\def\Ind{{\text{\rm Ind}}}

\def\perm#1{{\llbracket{#1}\rrbracket}}
\def\cycper#1{{\llparenthesis{\,}{#1}{\,}\rrparenthesis}}

\newcommand\cca[1]{{%
  \ooalign{\raisebox{-.2ex}{\larger[3]$\circlearrowright$}\cr
    \hidewidth$#1$\hidewidth}}}

\newcommand*{\Scale}[2][4]{\scalebox{#1}{\ensuremath{#2}}}%

\newtheorem{theorem}{Theorem} 
\newtheorem{theorem*}{Theorem}

\newtheorem{corollary}[theorem]{Corollary}
\newtheorem{lemma}[theorem]{Lemma}

\newtheorem{claim}[theorem]{Claim}

\newtheorem{observation}[theorem]{Observation}

\theoremstyle{definition}

\newtheorem{definition}{Definition}

\begin{document}


\title{Deciding monotonicity of simple drawings of the complete graph}

\author[1]{Oswin Aichholzer\thanks{\textit{E-mail:} \texttt{oswin.aichholzer@tugraz.at.}}}

\author[2]{Thomas Hackl\thanks{This work was conducted while this author was a postdoctoral researcher at the Institute of Software Technology, Graz University of Technology. He is currently employed in the private sector.}}

\author[2]{Alexander Pilz\textsuperscript{$\dag$}}

\author[3]{Gelasio Salazar\thanks{\textit{E-mail:} \texttt{gelasio.salazar@uaslp.mx.}}}
\author[1]{Birgit Vogtenhuber\thanks{\textit{E-mail:} \texttt{birgit.vogtenhuber@tugraz.at.}}}

\affil[1]{Institute of Algorithms and Theory, Graz University of Technology, Austria}
\affil[2]{Institute of Software Technology, Graz University of Technology, Austria}
\affil[3]{Instituto de F\'\i sica, Universidad Aut\'onoma de San Luis Potos\'{\i}, Mexico}

\maketitle

\begin{abstract}
A drawing of a graph is {\em $x$-monotone} if every vertical line intersects each edge of the graph at most once. We present an $O(n^5)$ time algorithm for deciding whether a simple drawing of the complete graph $K_n$ is weakly isomorphic to an $x$-monotone drawing. We note that this algorithm can also decide whether a drawing of $K_n$ is strongly isomorphic to an $x$-monotone drawing.
\end{abstract}

\section{Introduction}\label{sec:intro}

In a {\em drawing} of a graph on some surface, vertices are represented by distinct points and each edge is represented by a Jordan arc whose endpoints are the endvertices of the edge. No edge passes through a vertex, and edges intersect each other in a finite number of points. For simplicity usually it is also assumed that no three edges meet at a common interior point.

Throughout this paper we work with simple drawings of the complete graph $K_n$. We recall that in a {\em simple} drawing (also known as a {\em good} drawing) in addition to the previous properties no two edges share more than one point (either a common endvertex or a proper crossing), and no edge crosses itself. An important motivation for investigating simple drawings is that every crossing-minimal drawing of a graph is simple~\cite{schaefer-2018-cng}. The study of simple drawings and their substructures has attracted significant interest in a variety of contexts~\cite{shootingstars,twisted,bergold2025plane,fulekvargas,kynclimproved,pst,tangled-thrackle,schaeferdetour,sukzeng}. We emphasize that throughout this work all drawings under consideration are implicitly assumed to be simple and unless otherwise stated are hosted in the plane $\real^2$.

A drawing is $x$-\textit{monotone} if every vertical line intersects each edge at most once. See Figure~\ref{fig:150} for an illustration. Our main results involve the existence of polynomial time algorithms to test whether a given drawing of $K_n$ in the plane is (weakly or strongly) isomorphic to an $x$-monotone drawing.

We also recall that two drawings $D,D'$ of the same graph are {\em weakly isomorphic} if there is an incidence-preserving bijection between the drawings such that two edges cross in $D$ if and only if their images in $D'$ cross. Now $D$ and $D'$ are {\em strongly isomorphic} if they induce homeomorphic cell decompositions of the sphere. That is, they are strongly isomorphic if $D'$ can be obtained from $D$ by performing an inverse stereographic projection to the sphere, followed by a self-homeomorphism of the sphere, and finally followed by a stereographic projection back to the plane.

There seem to be very few algorithmic results related to $x$-monotone drawings reported in the literature. Fulek, Pelsmajer, Schaefer, and {\v{S}}tefankovi{\v{c}}~\cite{fpss} gave an $O(n^2)$ time algorithm that tests whether a graph with given $x$-coordinates assigned to the vertices has an $x$-monotone embedding (respecting the given $x$-coordinates). Recently, Kyn\v{c}l and Soukup established an NP-hardness result on the related notion of cylindrical monotonicity~\cite{extendingm}.

Our main result is the following.

\begin{theorem}\label{thm:mon}
There is an $O(n^5)$ time algorithm that decides if a given drawing of $K_n$ is weakly isomorphic to an $x$-monotone drawing.
\end{theorem}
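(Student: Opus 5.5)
The plan is to reduce monotonicity to a combinatorial condition on the rotation system, or equivalently on the crossing pairs, and then check that condition with an algorithm that guesses a small amount of data and verifies it in polynomial time. Recall that for simple drawings of $K_n$ the weak isomorphism class is determined by the rotation system up to global reversal, and the rotation system can be computed from the set of crossing pairs. Hence the first step is to compute, from the given drawing, the rotation system (equivalently, the crossing/orientation of every $K_4$ and the rotation of every $K_5$) in $O(n^4)$ or $O(n^5)$ time.

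The central structural claim I would aim to prove is the following. A drawing $D$ of $K_n$ is weakly isomorphic to an $x$-monotone drawing if and only if there is a linear order $v_1<\dots<v_n$ of the vertices, together with an ``above/below'' orientation, such that every induced $K_4$ and $K_5$ is consistent with a monotone drawing under that order. For $K_4$, monotonicity in order $a<b<c<d$ constrains which pairs of edges may cross; for example, $ad$ can never cross $bc$ in a way incompatible with the order. Necessity is immediate, since subdrawings of monotone drawings are monotone. For sufficiency, I would build a monotone drawing incrementally by inserting vertices from left to right (or by induction on $n$, removing $v_n$). The new vertex $v_n$ is placed far to the right, and each edge $v_iv_n$ is routed as an $x$-monotone curve whose sequence of crossings is dictated by the local $K_4$/$K_5$ data. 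Routing is done in the ``vertical slab'' structure of the existing monotone drawing, and the simplicity and weak-isomorphism requirements must then be verified.

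Algorithmically, the key point is that the leftmost vertex $v_1$ of a monotone drawing of $K_n$ must lie on the outer boundary. The order of the remaining vertices is forced by the rotation at $v_1$ together with the side information, since in a monotone drawing the star at $v_1$ is drawn with edges that are all $x$-monotone. So I would try each of the $n$ candidate leftmost vertices and both reflections, and read off the candidate order in $O(n)$ time, in the spirit of how $x$-monotone drawings relate to pseudolinear/star-based orders. Then I verify the $O(n^4)$ $K_4$-conditions for that order, for a total of $O(n)\cdot O(n^4)=O(n^5)$. I expect the $K_4$ check alone, possibly with the triangle orientations, to suffice, so that $K_5$ checks are unnecessary.

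The main obstacle is the sufficiency direction: showing that local consistency of all $K_4$'s with a fixed vertex order implies global realizability as an $x$-monotone drawing. My first attempt would be to proceed by induction, deleting the rightmost vertex. One then shows that the star of $v_n$ can be reinserted monotonically. The difficulty is that each edge $v_iv_n$ must cross, in a monotone way, exactly the edges prescribed by the crossing data, and these crossings must occur in an order compatible with the existing arrangement. For this, I would exploit the fact that weak isomorphism lets me redraw freely, as long as the crossing pairs are preserved. Then I would invoke a pseudoline-arrangement-type extension argument, akin to Levi's enlargement lemma, within the vertical ordering of edges in each slab.
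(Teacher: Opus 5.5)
There is a genuine gap, and it sits at the algorithmic heart of the problem: finding the $x$-order. Your running time depends on the claim that once the leftmost vertex (and a reflection) is fixed, the order of the remaining vertices can be read off from the rotation at that vertex in $O(n)$ time, ``since the star at $v_1$ is drawn with $x$-monotone edges''. This does not follow. In an $x$-monotone drawing, the rotation at the leftmost vertex records the vertical order in which its edges leave it. That order carries no information about where along each edge the other endpoint lies, and a star by itself is compatible with every $x$-order of its leaves. Figure~\ref{fig:150} already shows the discrepancy. Its $x$-order is $4,7,3,5,2,1,6$, while the rotation at the leftmost vertex $4$ is the cyclic sequence $(6,3,5,7,2,1)$, which matches $7,3,5,2,1,6$ under no cyclic shift or reversal. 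Nor is any order ``forced''. For a convex-position drawing with prescribed leftmost and rightmost vertices, every interleaving of the two hull chains is a valid $x$-order, so valid orders are typically far from unique. Your unspecified ``side information'' does not close this hole. You give no procedure producing polynomially many candidate orders that provably contains a valid one, so the $O(n)\cdot O(n^4)$ count does not stand.

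The paper never guesses an order. Lemma~\ref{lem:charac} (derived from the Balko--Fulek--Kyn\v{c}l characterization, Lemma~\ref{lem:charoriginal}, via partial shellings) reduces monotonicity to the existence of a sequence $i_n,\ldots,i_1$ satisfying (M1)--(M3). The crucial structural point is that (M2) forces the set $\{i_{r-1},\ldots,i_1\}$ of not-yet-placed vertices to be a wedge at the current vertex $i_r$. So whether a good suffix $i_r,\ldots,i_1$ exists depends only on that $i_r$-wedge and the tail $i_1$. That gives $O(n^3)$ states per tail and $O(n)$ transitions per state. A dynamic program over ``good wedges with tail $j$'', using the constant-time query structures of Lemmas~\ref{lem:ds1}--\ref{lem:ds3}, therefore runs in $O(n^5)$ overall. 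This collapse of the search space is the idea your proposal is missing.

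Your sufficiency direction (local $K_4$/$K_5$ consistency with a fixed order implies a weakly isomorphic $x$-monotone drawing) is a lesser issue, since for a fixed order you could instead invoke Lemma~\ref{lem:charoriginal}. As written, though, it is only a sketch via a Levi-type insertion argument, and your expectation that $K_4$ checks alone suffice is unsupported.
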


We remark that in Theorem~\ref{thm:mon} and throughout this paper we do not assume that the input drawings of $K_n$ are given in any particular format. We just assume that drawings are given in some ``reasonable'' data structure from which we can obtain the cell structure and the extended rotation system (see~\cite{complexity_kyncl,enukyncl}) of the drawing in $O(n^5)$ time. Since our ultimate goal is to prove Theorem~\ref{thm:mon} we may as well assume that we are given from the onset both the cell structure and the extended rotation system of any input drawing.

We note that Theorem~\ref{thm:mon} easily implies an analogous result under strong isomorphism. 

To see this we first recall Gioan's theorem~\cite{arroyogioan,gioan,schaeferdetour}, which states that if we have two drawings in the sphere with the same rotation system then one can be obtained from the other by a sequence of triangle mutations. We also recall that two drawings are weakly isomorphic if and only if their rotation systems are equivalent (that is, perhaps after some relabelling of the vertices, they are either identical or the reverse of one another). We finally note that it is easy to see that in the plane any triangle mutation on an $x$-monotone drawing can be performed while keeping its $x$-monotonicity. Combining these three facts it is straightforward to see that a drawing of $K_n$ is weakly isomorphic to an $x$-monotone drawing if and only if it is strongly isomorphic to an $x$-monotone drawing. Therefore Theorem~\ref{thm:mon} implies the following.

\begin{corollary}\label{cor:mon2}
There is an $O(n^5)$ time algorithm that decides if a given drawing of $K_n$ is strongly isomorphic to an $x$-monotone drawing.
\end{corollary}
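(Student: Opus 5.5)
The corollary should follow from Theorem~\ref{thm:mon} once we show that, for drawings of $K_n$, being weakly isomorphic to an $x$-monotone drawing is equivalent to being strongly isomorphic to one. The algorithm is then literally the algorithm of Theorem~\ref{thm:mon}, run on the input drawing $D$ and returning the same answer, so the running time is $O(n^5)$.

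One direction of the equivalence is immediate. Strong isomorphism implies weak isomorphism: a homeomorphism of cell decompositions of the sphere preserves which pairs of edges cross. So if $D$ is strongly isomorphic to an $x$-monotone drawing, it is weakly isomorphic to that same drawing.

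For the converse, let $D'$ be $x$-monotone and weakly isomorphic to $D$.
\begin{enumerate}
\item Weak isomorphism of drawings of $K_n$ is equivalent to equivalence of rotation systems. After relabelling $D'$ and, if necessary, reflecting it in a vertical line, we may assume $D$ and $D'$ have identical rotation systems. Reflection in a vertical line preserves $x$-monotonicity and reverses all rotations.
\item Project both drawings to the sphere. By Gioan's theorem, $D'$ can be transformed into a drawing strongly isomorphic to $D$ by a finite sequence of triangle mutations.
\item Argue by induction along this sequence that each intermediate drawing can be realized in the plane as an $x$-monotone drawing. This uses the fact that a triangle mutation on an $x$-monotone drawing can be performed while keeping it $x$-monotone.
\end{enumerate}

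Step 3 is the main obstacle, for two reasons. First, Gioan's theorem lives on the sphere, whereas $x$-monotonicity lives in the plane, so the choice of unbounded face must be controlled. I would handle this by performing each mutation in the plane directly. A triangle mutation only moves the segments of edges crossing a triangular cell that contains no vertex. Since the triangle's three sides are $x$-monotone arcs, the crossing edges can be rerouted across the opposite vertex of the triangle inside a thin neighbourhood, keeping each edge a graph of a function over its $x$-interval. Second, even if the mutation involves the face at infinity on the sphere, one can choose the projection point so that the planar mutation is still a local move in a bounded region.

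After the last mutation we obtain an $x$-monotone planar drawing whose spherical image is strongly isomorphic to that of $D$. A homeomorphism of the sphere then carries it to a drawing strongly isomorphic to $D$ in the sense defined in the introduction, which completes the equivalence.
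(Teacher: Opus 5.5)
Your proposal is correct and is essentially the paper's own argument, spelled out in more detail. It combines three ingredients: the rotation-system characterization of weak isomorphism (your vertical reflection handles the reversed case), Gioan's theorem, and the fact that each triangle mutation can be carried out while keeping $x$-monotonicity (for instance, push the side spanning the triangle's whole $x$-range across the middle corner, hugging the other two sides).

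The one point to repair is your treatment of the face at infinity. Re-choosing the projection point would destroy the $x$-monotonicity you are inducting on. No fix is needed, however: the leftmost vertex of an $x$-monotone drawing lies on the boundary of the unbounded cell, so that cell can never be the vertex-free triangle of a mutation.
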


A preliminary version of this work has been presented at the XVI Spanish Meeting on Computational Geometry~\cite{ahpsv}.

\section{Proof of Theorem~\ref{thm:mon}}\label{sec:proofmon}

For brevity, for the rest of this paper we say that a drawing is {\em monotone} if it is weakly isomorphic to an $x$-monotone drawing. Under this terminology Theorem~\ref{thm:mon} reads as follows.

\vglue 0.3 cm
\noindent{\bf Theorem~\ref{thm:mon}. } {\em There is an $O(n^5)$ time algorithm that decides if a given drawing of $K_n$ is monotone.}
\vglue 0.3 cm

In this section we reduce Theorem~\ref{thm:mon} to two lemmas. The first one (namely Lemma~\ref{lem:charac}) gives necessary and sufficient conditions for a drawing of $K_n$ to be monotone, whereas the second one (namely Lemma~\ref{lem:n5}) states that these properties can be tested in $O(n^5)$ time. We finish the section by arguing that indeed Lemmas~\ref{lem:charac} and~\ref{lem:n5} together easily imply Theorem~\ref{thm:mon}. The rest of the paper is devoted to the proofs of these lemmas.

In the proof of Theorem~\ref{thm:mon} we make extensive use of rotation systems. We recall that the {\em rotation at} a vertex in a drawing $D$ is the cyclic permutation that records the clockwise cyclic order in which the edges (each denoted by the label of their `other' endpoint) incident with the vertex leave the vertex. The {\em rotation system} of $D$ is the collection of the rotations at all the vertices. 

\vglue 0.3 cm
\noindent{\bf Remark. }{\sl Throughout this paper we assume that the vertices of $K_n$ are labelled with the integers in $[n]=\{1,\ldots,n\}$.}
\vglue 0.3 cm


\subsection{Wedges}

The characterization of monotone drawings given in Lemma~\ref{lem:charac} below relies on the notion of a wedge, an object that plays a paramount role in this work. For an illustration of this concept we refer the reader to Figure~\ref{fig:150} and its caption. In a nutshell, a wedge is a consecutive subpermutation of a vertex rotation.

To make this precise and lay out the corresponding notation, for the rest of this paper we use $\cca{i}$ to denote the rotation at vertex $i$ in a drawing $D$ of $K_n$. In principle this notation should include some reference to the drawing $D$, but this is unnecessary as at all times we assume that we are working with a fixed drawing. In particular, in the next definition we assume that we are working with a fixed drawing of $K_n$.


\begin{definition}[Wedges]
{\sl If $i,j$ are distinct vertices and $r$ is an integer, $1\le r \le n-1$, then we use $\H{i}{j}{r}$ to denote the (linear) consecutive subsequence of $\cca{i}$ of length $r$ that starts at $j$. We say that $\H{i}{j}{r}$ is an $i$-{\em wedge}, or equivalently that it is a {\em wedge with hub vertex (or simply hub) $i$}. The {\em initial} vertex of this $i$-wedge is the first vertex in $\H{i}{j}{r}$ (namely $j$) and its {\em final} vertex is the last vertex in $\H{i}{j}{r}$.}
\end{definition}
 
\begin{figure}[ht!]
\def\ta#1{{\Scale[1.6]{#1}}}
\centering
\scalebox{0.3}{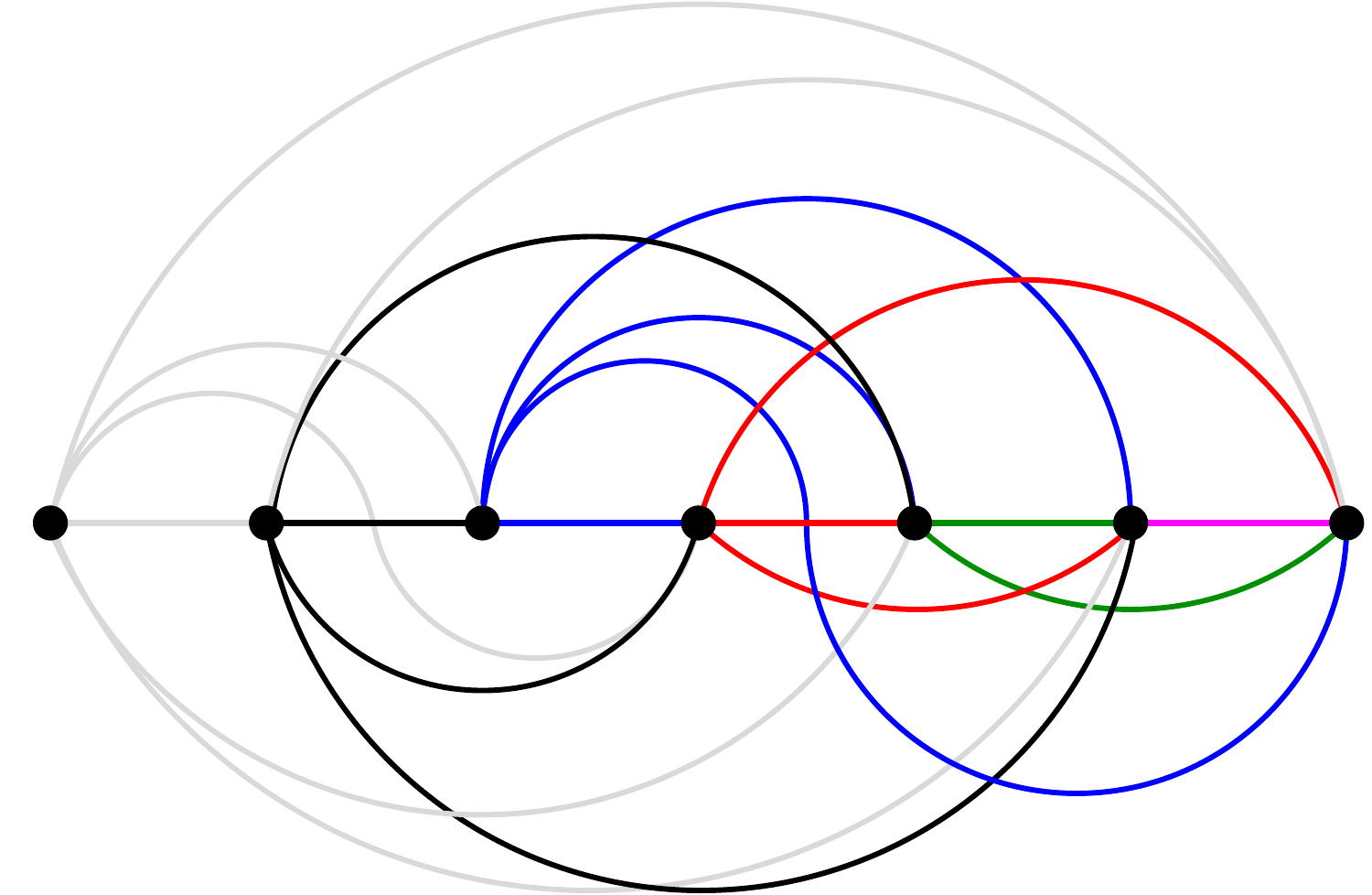}
\caption{An $x$-monotone drawing $D$ of $K_7$. In this drawing the $4$-wedge $\H{4}{6}{6}$ is $\perm{6,3,5,7,2,1}$, the $7$-wedge $\H{7}{6}{5}$ is $\perm{6,2,3,5,1}$, the $3$-wedge $\H{3}{1}{4}$ is $\perm{1,2,6,5}$, the $5$-wedge $\H{5}{6}{3}$ is $\perm{6,2,1}$, the $2$-wedge $\H{2}{2}{1}$ is $\perm{1,6}$, and the $1$-wedge $\H{1}{6}{1}$ is $\perm{6}$. The $x$-order of this drawing is $i_7=4,i_6=7, i_5=3, i_4=5, i_3=2, i_2=1, i_1=6$. Vertices $i_7$ and $i_1$ are on the boundary of a cell in $D$ (the unbounded cell), and so (M1) holds. Clearly for each $r=7,6,5,4,3$ the edge $i_r,i_{r-1}$ crosses no edge with both endvertices in $\{i_{r-2},\ldots,i_1\}$, and so (M3) holds. We finally note that if we write the wedges identified at the beginning of this caption using the labels $i_7,\ldots,i_1$ we obtain that $\{i_6,\ldots,i_1\}$ is the vertex set of an $i_7$-wedge, $\{i_5,\ldots,i_1\}$ is the vertex set of an $i_6$-wedge, $\{i_4,\ldots,i_1\}$ is the vertex set of an $i_5$-wedge, $\{i_3,\ldots,i_1\}$ is the vertex set of an $i_4$-wedge; $\{i_2,i_1\}$ is the vertex set of an $i_3$-wedge; and $\{i_1\}$ is the vertex set of an $i_2$-wedge. Thus (M2) holds.}
\label{fig:150}
\end{figure}

Note that while the rotation at a vertex is a cyclic permutation, a wedge is a linear permutation.

We emphasize that saying that the hub vertex of a wedge is $i$ is equivalent to saying that it is an $i$-wedge. We also note that the hub vertex of a wedge is not part of the wedge. Note that we allow the possibility that a wedge consists of a single vertex. Indeed, if $i,j$ are any two distinct vertices then $j$ is the only vertex in the $i$-wedge $\H{i}{j}{1}$.

For instance, in the $x$-monotone drawing in Figure~\ref{fig:150} the wedge $\H{3}{1}{4}$ is the (linear) permutation $\perm{1,2,6,5}$. We refer the reader to the caption of Figure~\ref{fig:150} for an identification of several wedges in that drawing.


\vglue 0.3 cm
\noindent{\bf Notation. }{\em When there is no need to specify explicitly its initial vertex and its length, we often refer to a wedge with a calligraphic upper case letter, typically $\ww$.}


\vglue 0.3 cm
\noindent{\bf Notation. }{\em Wedges are permutations of vertices, but many of our arguments (even the characterization of $x$-monotonicity in Lemma~\ref{lem:charac} below) involve the (unordered) set of vertices of a wedge. If $\ww=\H{i}{j}{r}$ is a wedge then we use $\set{\ww}$ (equivalently, $\set{\H{i}{j}{r}}$) to denote the set of its vertices.}
\vglue 0.3 cm


\subsection{Proof of Theorem~\ref{thm:mon}}

The characterization of monotone drawings given by Lemma~\ref{lem:charac} below is based upon three glaring properties satisfied by every $x$-monotone drawing. Suppose that in an $x$-monotone drawing the vertices are ordered $i_n, i_{n-1},\ldots,i_1$ from left to right, that is, in increasing order in their $x$-cooordinates. We say that $i_n,i_{n-1},\ldots,i_1$ is the {\em $x$-order} of the drawing. (Admittedly it seems more natural to choose the labelling using the increasing order $i_1,\ldots,i_n$, but our choice turns out to be a lot more convenient for some discussions and proofs). 

It is easy to see that if $D$ is an $x$-monotone drawing of $K_n$ with $x$-order $i_n,\ldots,i_1$, then (see for instance Figure~\ref{fig:150} for an illustration):

\begin{enumerate}

\item[(M1)] there is a cell $C$ of $D$ that has both $i_1$ and $i_n$ in its boundary;

\item[(M2)] for each $r=n,n-1,\ldots,{2}$, there is an $i_r$-wedge whose vertex set is $\{i_{r-1},\ldots,i_1\}$; and

\item[(M3)] for each $r=n,n-1,\ldots,3$, the edge $i_r i_{r-1}$ does not cross any edge with both endvertices in $\{i_{r-2},\ldots,i_1\}$.

\end{enumerate}

We note that (M2) (respectively, (M3)) trivially holds for $r=2$ (respectively, for $r=3$) for every labelling $i_n,\ldots,i_1$ in any (not necessarily $x$-monotone) drawing of $K_n$. Indeed, regardless of which vertices get labelled $i_2$ and $i_1$ there is a wedge with hub vertex $i_2$ whose only vertex is $i_1$, and so (M2) holds. Regarding (M3) it suffices to note that the edge $i_3 i_2$ cannot cross any edge with both endvertices in $\{i_1\}$, simply because $\{i_1\}$ contains only one vertex. 

We claim that these obviously necessary conditions guarantee that a drawing of $K_n$ is monotone:

\begin{lemma}\label{lem:charac}
A drawing $D$ of $K_n$ is monotone if and only if there is a sequence $(i_n,i_{n-1},\ldots,i_1)$ of its vertices that satisfies (M1)--(M3).
\end{lemma}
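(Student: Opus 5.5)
The ``only if'' direction is the easy one, since (M1)--(M3) are preserved under weak isomorphism only in a suitable sense. So I would argue as follows. Suppose $D$ is weakly isomorphic to an $x$-monotone drawing $D'$. By Gioan's theorem together with the fact that triangle mutations preserve $x$-monotonicity (both recalled in the introduction), we may assume $D'$ is strongly isomorphic to $D$, possibly after relabelling and reflection. Rotations, crossings, and cells incident to vertices then transfer, and (M1)--(M3) hold for the $x$-order of $D'$ as observed before the statement. Hence they hold in $D$.

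For the ``if'' direction I would induct on $n$, building an $x$-monotone drawing $D'$ with the same rotation system as $D$ up to reversal. Weak isomorphism for $K_n$ is equivalent to equality of rotation systems, so this suffices. Let $D_0=D-i_n$. The sequence $(i_{n-1},\ldots,i_1)$ satisfies (M2) and (M3) in $D_0$ trivially, since these are inherited conditions. The delicate part is (M1) for $D_0$: we need a cell of $D_0$ with $i_{n-1}$ and $i_1$ on its boundary. I would obtain it from the cell $C$ of (M1), the wedge of (M2) at $i_n$, and (M3) for $r=n$. Deleting $i_n$ merges $C$ with the cells bounded by edges at $i_n$. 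The edge $i_ni_{n-1}$ crosses no edge of $D-\{i_n,i_{n-1}\}$, so it lies inside a single cell of $D_0$ incident to $i_{n-1}$. The wedge condition says the edges $i_ni_j$ leave $i_n$ consecutively. Combining these, $i_{n-1}$ and $i_1$ should lie on a common cell of $D_0$.

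By induction, $D_0$ is then weakly (hence, as in the introduction, strongly) isomorphic to an $x$-monotone drawing with $x$-order $i_{n-1},\ldots,i_1$. I would also strengthen the inductive statement: the realization can be chosen so that the cell witnessing (M1) is the unbounded cell, with $i_{n-1}$ leftmost. Then I place $i_n$ far to the left and draw the edges $i_ni_j$ as $x$-monotone curves.

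The main obstacle is this extension step. Each new edge $i_ni_j$ must realize exactly the crossings it has in $D$ with the edges of $D_0$, must respect the rotation at $i_n$, and must cross each other edge at most once. The vertical order at which the edges leave $i_n$ is prescribed by the wedge in (M2). Since all of $\{i_{n-1},\ldots,i_1\}$ lies to the right of $i_n$, the whole rotation at $i_n$ is exactly that linear order read around the left side. I expect a charging argument via the extended rotation system to determine the crossings: for each $j$, the crossings of $i_ni_j$ with edges $i_ki_l$ are forced by the rotations of the $K_4$'s on $\{i_n,i_j,i_k,i_l\}$. Realizing them monotonically would be done by sweeping left to right and routing each new edge through the vertical slabs between consecutive vertices. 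This must be done so that curves emanating from $i_n$ in the prescribed order stay sorted except at the mandated crossings, with mutual crossings among the $i_ni_j$ impossible since they share $i_n$. I would try to prove that, within each slab, the required permutation of the new edges relative to the old ones is consistent. This uses (M3) at each level together with the induction hypothesis, and I expect it to be the technical heart. If direct routing fails, the fallback is to choose the realization of $D_0$ more carefully via triangle mutations.
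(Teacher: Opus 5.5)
Your ``only if'' direction is fine and matches the paper. The ``if'' direction, however, has a genuine gap. The extension step you call the technical heart is the whole content of that direction, and the proposal only describes what you expect to happen there (``I expect a charging argument'', ``I would try to prove'', a ``fallback'' via triangle mutations). Nothing shows that the $x$-monotone realization of $D-i_n$ supplied by induction can be extended by a leftmost vertex $i_n$ whose edges realize exactly the crossings of $D$. It is not even clear that this particular realization is one that can be extended.

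You also never pin down what (M2) contributes. At $r=n$ it is vacuous, since every $i_n$-wedge of size $n-1$ has vertex set $\{i_{n-1},\ldots,i_1\}$, so it does not ``prescribe'' the vertical order at $i_n$. Its content lies at the hubs $i_s$ with $s<n$, and a correct proof has to use it there. (Minor: your derivation of (M1) for $D-i_n$ is muddled, but the conclusion holds for a simpler reason and needs no wedge. The cell of $D-i_n$ containing $C$ contains the point $i_n$ and the interior of the uncrossed edge $i_ni_{n-1}$, so $i_{n-1}$ is on its boundary, and $i_1\in\partial C$ remains on it.)

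The paper constructs nothing. It invokes the Balko--Fulek--Kyn\v{c}l characterization (Lemma~\ref{lem:cha}): $D$ is monotone iff it has a shelling whose path does not cross itself. By Lemma~\ref{lem:shell} it then suffices to check a partial shelling witnessed by $C$, together with (M1) and the non-self-crossing path given by (M3). The ``forward'' half of the partial shelling is exactly your $D-i_n$ argument, iterated using (M3).

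The ``backward'' half needs the idea missing from your proposal: for every $r$, the edge $i_ri_{r-1}$ crosses no edge $i_ti_s$ with $t>s>r$. The case $t=s+1$ is excluded by (M3). Otherwise the $K_4$ on $\{i_t,i_s,i_r,i_{r-1}\}$ is the unique crossing $K_4$, and there are two possibilities:
\begin{itemize}
\item $i_{s+1}i_s$ leaves $i_s$ into a region that forces the path $i_t,i_{t-1},\ldots,i_{s+1},i_s$ to cross $i_ri_{r-1}$, contradicting (M3); or
\item the rotation at $i_s$ interleaves $\{i_r,i_{r-1}\}$ with $\{i_t,i_{s+1}\}$, so no $i_s$-wedge has vertex set $\{i_{s-1},\ldots,i_1\}$, contradicting (M2).
\end{itemize}
If you want to keep a constructive route, this backward non-crossing property is at least what your slab-by-slab consistency argument would have to establish, and you would still need to prove that the routing exists. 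The simpler fix is to reduce to the shelling characterization.
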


The proof of Lemma~\ref{lem:charac} is based on a characterization of monotonicity established by Balko, Fulek, and Kyn\v{c}l~\cite{kyncl_monotone}. In Section~\ref{sec:cha} we review their characterization, which is given in terms of shelling sequences. Also in that section we show that this characterization can be formulated in terms of the weaker notion of partial shelling sequences. With this in hand we prove Lemma~\ref{lem:charac} in Section~\ref{sec:proofcharac}.

The next lemma, whose proof is given in Section~\ref{sec:algo}, is the second main ingredient in the proof of Theorem~\ref{thm:mon}.

\begin{lemma}\label{lem:n5}
There is an $O(n^5)$ time algorithm that decides if a given drawing $D$ of $K_n$ has a sequence $i_n,\ldots,i_1$ of its vertices such that (M1)--(M3) hold in $D$. 
\end{lemma}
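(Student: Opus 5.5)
We plan to design a dynamic-programming / greedy search over \emph{suffixes} of the sequence, exploiting that conditions (M2) and (M3) only relate $i_r$ and $i_{r-1}$ to the set $\{i_{r-1},\ldots,i_1\}$. Our first step is to guess the pair $(i_n,i_1)$: there are $O(n^2)$ choices, and (M1) can be checked for each pair in constant amortized time once we precompute, from the cell structure, the set of vertex pairs lying on a common cell boundary. This precomputation costs $O(n^4)$ overall, since there are $O(n^4)$ crossings and hence $O(n^4)$ cells. Once $i_n$ is fixed, (M2) for $r=n$ is automatic, because $\{i_{n-1},\ldots,i_1\}$ is all of $\cca{i_n}$. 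From then on we build the sequence from the left, maintaining the current set $S_r=\{i_{r-1},\ldots,i_1\}$ of vertices not yet placed. At step $r$ we have $i_r$ and $S_r$ with $i_1\in S_r$, and we must choose $i_{r-1}\in S_r\setminus\{i_1\}$ (for $r>2$) so that two things hold. First, $S_r\setminus\{i_{r-1}\}$ must be the vertex set of an $i_{r-1}$-wedge, that is, it must occupy a cyclically consecutive block of $\cca{i_{r-1}}$ restricted appropriately. Second, the edge $i_ri_{r-1}$ must cross no edge inside $S_r\setminus\{i_{r-1}\}$.

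The key step, and the main obstacle, is to show that these choices can be made greedily. The claim to prove is an exchange lemma: if some candidate $v$ is valid at stage $(i_r,S_r)$ and a full completion exists through some other valid candidate $w$, then a completion also exists through $v$. Without such a lemma the state space is $(i_r,S_r)$, which is exponential. If the exchange lemma fails, the fallback is to show that the reachable sets $S_r$ are determined by $i_r$ together with a bounded amount of extra data. For instance, $S_r$ might be determined by $(i_1,i_r)$ and the wedge structure, since $S_r$ equals $V\setminus\{i_n,\ldots,i_r\}$ and the wedge conditions nest. Either route would yield polynomially many states $(i_r,i_{r-1})$, and hence a polynomial DP. We would first attempt the exchange argument. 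It would use the fact that the validity of $w$'s completion forces $w$ to lie at an end of the $i_r$-wedge on $S_r$, and that simplicity of the drawing restricts how $i_rv$ and $i_rw$ interact with edges in $S_r$.

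For the running time we precompute, for every vertex $i$, the positions in $\cca{i}$. The wedge test for a set $S$ at hub $v$ then reduces to checking that $S$ forms one cyclic interval. We maintain this incrementally via counts of ``boundary'' adjacencies, so each update takes $O(n)$ time. We also precompute the crossing table for all pairs of edges, which takes $O(n^4)$ time, and for each edge $e$ we maintain the number of edges inside the current $S$ that $e$ crosses. Removing a vertex from $S$ updates these counters for all $O(n^2)$ edges against $O(n)$ removed edges. Amortizing the candidate checks, the cost is $O(n^3)$ per sequence construction, so the $O(n^2)$ guesses of $(i_n,i_1)$ give the claimed $O(n^5)$ total. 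The correctness of the greedy step is the part that needs genuine geometric work; the bookkeeping is routine.
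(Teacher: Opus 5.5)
Your reformulation of (M2) and (M3) as a left-to-right peeling process is correct. The gap is exactly where you flag it: the greedy exchange lemma is neither proved nor supported by any argument, and your $O(n^5)$ bound depends on it, because each of the $O(n^2)$ guesses of $(i_n,i_1)$ is only allowed $O(n^3)$ time, i.e.\ essentially one greedy path.

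The reason you give for needing it is also wrong: the state $(i_r,S_r)$ is not exponential. Condition (M2) at $r$ says that $S_r=\{i_{r-1},\ldots,i_1\}$ is itself the vertex set of an $i_r$-wedge of length $r-1$. So for given $i_r$ and $r$ there are at most $n-1$ candidates for $S_r$ (one per initial vertex), and the states are simply wedges, of which there are $O(n^3)$. This is the missing idea, and the paper's proof is built on it. Call an $i$-wedge $\ww$ \emph{good with tail $j$} if some sequence with head $i$ and tail $j$, satisfying (M2) and (M3) on its own vertices, has $\set{\ww}$ as its set of non-head vertices. Then a wedge $\ww$ with hub $i$ and size $s+1$ is good with tail $j$ if and only if some $k\in\ww$ admits a $k$-wedge $\zz$ with $\set{\zz}=\set{\ww}\setminus\{k\}$, $\zz$ good with tail $j$, and $ik$ crossing no edge with both ends in $\zz$. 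A DP over wedges explores all choices, so no exchange lemma is needed.

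To reach $O(n^5)$ you also should not guess both endpoints. With $O(n^3)$ wedge states and $O(n)$ choices of $k$ each, you already pay $O(n^4)$ per guess, hence $O(n^6)$ overall. The paper fixes only the tail $j$ ($n$ choices) and fills in goodness for all wedges of all hubs by increasing size, which yields every feasible head at once. It then checks (M1) for the $O(n^2)$ resulting (head, tail) pairs.

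That $O(n^4)$-per-tail bound requires $O(1)$-time transitions, which come from two precomputed tables absent from your plan:
\begin{itemize}
\item whether $\set{\ww}\setminus\{k\}$ is the vertex set of a $k$-wedge, computed via the shortest $k$-wedges containing a given set, extended one vertex at a time, in $O(n^5)$ total;
\item whether $ik$ crosses an edge inside $\ww\setminus\{k\}$, computed by a DP on wedge length using, for each triple $(i,k,\ell)$, the nearest vertex preceding $\ell$ in the rotation at $i$ whose edge to $\ell$ crosses $ik$, in $O(n^4)$ total.
\end{itemize}

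Finally, your own bookkeeping does not give $O(n^3)$ per sequence. Updating counters for $O(n^2)$ edges against $O(n)$ removed edges costs $O(n^3)$ per removal, hence $O(n^4)$ per sequence and $O(n^6)$ over all guesses.
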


We finish this section by noting for completeness that Theorem~\ref{thm:mon} indeed follows easily from Lemmas~\ref{lem:charac} and~\ref{lem:n5}.

\begin{proof}[Proof of Theorem~\ref{thm:mon}]
Let $D$ be a drawing of $K_n$. By Lemma~\ref{lem:n5} in $O(n^5)$ time we may decide whether there is a sequence $i_n,\ldots,i_1$ of the vertices that satisfies (M1)--(M3) in $D$. In view of Lemma~\ref{lem:charac} this algorithm decides whether $D$ is monotone, and so we are done.
\end{proof}

\section{Characterizing Monotonicity Using Partial Shelling Sequences}\label{sec:cha}

Our aim in this section is to give a characterization of monotonicity (namely Lemma~\ref{lem:cha2} below) that draws upon a result by Balko, Fulek, and Kyn\v{c}l~\cite{kyncl_monotone}. As we shall see in the next section, this characterization is a key ingredient in the proof of Lemma~\ref{lem:charac}.

In Section~\ref{sub:bfk} we review the aforementioned result from~\cite{kyncl_monotone} (this is Lemma~\ref{lem:charoriginal} below), and use it to derive a slightly streamlined formulation better suited to our purposes (Lemma~\ref{lem:cha}). These statements involve the notion of a shelling sequence, which we review below. We finally put forward in Lemma~\ref{lem:cha2} of Section~\ref{sub:partial} our characterization, given in terms of the weaker notion of a {\em partial} shelling sequence. 

\subsection{A Characterization of Monotonicity Using Shelling Sequences}\label{sub:bfk}

In order to state the characterization given in~\cite{kyncl_monotone} we start by recalling that if a drawing $D$ is regarded as a subset of the plane, then a {\em cell} of $D$ is a connected component of $\real^2\setminus D$. (If $D$ has no edge crossings, that is, if it is an embedding, then one would usually refer to a cell as a face of the embedding). If $C$ is a cell of $D$ then we use $\partial{C}$ to denote its boundary. 

We finally recall the notion of a shelling sequence. Let $S=(i_n,\ldots,i_1)$ be a sequence of vertices in a drawing $D$ of $K_n$. For $n \ge s > r \ge 1$ we use $D(S,s,r)$ to denote the drawing obtained from $D$ by removing the vertices $i_n,\ldots,i_{s+1},i_{r-1},\ldots,i_1$ and their incident edges. 

\vglue 0.3 cm
\noindent{\bf Remark. }{\sl Throughout this paper we use the terms {\rm sequence} and {\rm permutation} (of vertices) interchangeably. In this section we exclusively use {\rm sequence}, as this is the customary terminology in the literature for the notions involved, such as the concept of a {\rm shelling sequence.}}
\vglue 0.3 cm

\begin{definition}[\'Abrego et al.~\cite{shellable_drawings}]\label{def:shellable}
{\sl Let $t\ge 1$ be an integer. A drawing $D$ of $K_n$ is {\em shellable} if there exists a sequence $S = (i_n , \ldots , i_1 )$ of vertices and a cell $C$ of $D$ with the following property. For all $n \ge s > r \ge 1$, the vertices $i_s$ and $i_r$ are on the boundary of the cell of $D(S, s, r)$ that contains $C$. The sequence $S$ is a {\em shelling sequence} (or simply a {\em shelling}) of $D$ witnessed by $C$.}
\end{definition}

We remark that the original definition of shellability is actually more general, as it involves sequences of $t$ vertices where $t$ may be strictly smaller than $n$ (thus one speaks of $t$-{\em shellable} sequences and $t$-{\em shellings}). Since we will not work in this context with sequences with fewer than $n$ vertices, it seems best to simply adopt the previous definition.

The characterization in~\cite{kyncl_monotone} focuses on when a drawing is weakly isomorphic to an $x$-monotone drawing with a given $x$-order. Formally, in~\cite{kyncl_monotone} a sequence $S=(i_n,\ldots,i_1)$ of vertices is said to be an $x$-{\em monotone} sequence of a drawing $D$ if $i_1$ and $i_n$ are incident with the unbounded cell of $D$ and $D$ is weakly isomorphic to a drawing with $x$-order $i_n,\ldots,i_1$. 

\begin{lemma}[{\cite[Lemma 4.8]{kyncl_monotone}}]\label{lem:charoriginal}
Let $D$ be a drawing of $K_n$, and let $(i_n,\ldots,i_1)$ be a sequence of the vertices of $K_n$. Then: 
\begin{center}
$(i_n,\ldots,i_1)$ is an $x$-monotone sequence of $D$
\end{center}
\[
\iff
\]
\begin{center}
$(i_n,\ldots,i_1)$ is a shelling of $D$, $i_1$ and $i_n$ are incident with the unbounded cell of $D$, \\ and the path $i_n,\ldots,i_1$ does not cross itself in $D$.
\end{center}
\end{lemma}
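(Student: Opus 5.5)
Since Lemma~\ref{lem:charoriginal} is quoted from~\cite{kyncl_monotone}, I would reprove it along the following lines, treating the two implications separately and proving the harder one (sufficiency) by induction on $n$, peeling off the last vertex $i_1$.

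For necessity, suppose $(i_n,\ldots,i_1)$ is an $x$-monotone sequence of $D$, and let $D'$ be an $x$-monotone drawing weakly isomorphic to $D$ with $x$-order $i_n,\ldots,i_1$. In $D'$ all three properties are immediate.
\begin{enumerate}
\item In each subdrawing $D'(S,s,r)$ the vertices $i_s$ and $i_r$ are the leftmost and rightmost vertices. Hence they lie on the unbounded cell, which contains the unbounded cell of $D'$.
\item The edges $i_{k+1}i_k$ occupy pairwise interior-disjoint $x$-intervals, so the path $i_n,\ldots,i_1$ does not cross itself.
\end{enumerate}
Non-self-crossing of the path is a weak-isomorphism invariant, so it transfers to $D$ at once. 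The shelling property does not obviously transfer, since cell incidences are not preserved under weak isomorphism. To handle it I would use Gioan's theorem: $D$ and $D'$ have the same rotation system, so one is obtained from the other by triangle mutations on the sphere. The hypothesis that $i_1$ and $i_n$ lie on the unbounded cell of $D$ lets us choose the point at infinity consistently. I would then check that a triangle mutation never changes whether two vertices $i_s,i_r$ share a cell of $D(S,s,r)$ with the distinguished cell. The reason is that a mutation only moves an edge across a crossing-free triangle region, which contains no vertex.

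For sufficiency I would induct on $n$; the small cases are trivial. Deleting $i_1$ gives the drawing $D(S,n,2)$, and the sequence $(i_n,\ldots,i_2)$ is again a shelling of it, witnessed by the cell containing $C$. The sub-path $i_n,\ldots,i_2$ still does not cross itself. The case $s=n$, $r=2$ of the shelling property puts $i_2$ and $i_n$ on the unbounded cell of $D(S,n,2)$. By induction, $D(S,n,2)$ is weakly isomorphic to an $x$-monotone drawing $D_1'$ with $x$-order $i_n,\ldots,i_2$.

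I would then add $i_1$ to the right of everything and draw the star at $i_1$ as $x$-monotone curves ending at $i_n,\ldots,i_2$ with the prescribed crossing pattern. The tool is a sweep from right to left: the star edges behave like a family of pseudo-segments emanating from $i_1$. As the sweep line passes each vertex $i_k$, the edge $i_1i_k$ must reach $i_k$ from the correct side, and each star edge must cross exactly those edges of $D_1'$ that it crosses in $D$. The shelling condition for the pairs $(i_s,i_1)$ says that, in $D(S,s,1)$, $i_1$ sees $i_s$ from the outer cell. I expect this to supply exactly the consistency needed: the vertical order of the star edges at each sweep position matches the rotation at $i_1$, and each crossing can be scheduled before the corresponding endpoint is reached. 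Non-crossing of $i_2i_1$ with the path lets us place $i_2i_1$ as the first edge, like an outer boundary edge.

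The main obstacle is this extension step. One has to show that the combinatorial data (rotation at $i_1$, crossings of the star with $D_1'$, shelling incidences) can always be realized monotonically. The induction may also need strengthening so that $D_1'$ is chosen compatibly with $D$, for instance strongly isomorphic on the relevant subdrawing. I would first attempt this by a careful sweep-line argument, reducing each local realizability obstruction to a violation of the shelling property for a suitable pair $(s,r)$.
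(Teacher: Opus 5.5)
\textbf{There is a genuine gap in the ``if'' direction.} The paper does not prove this statement; it imports it from Balko, Fulek, and Kyn\v{c}l~\cite{kyncl_monotone}. So your proposal has to stand on its own, and the ``if'' direction, which is the whole substance of the cited lemma, is not actually proved.

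Your induction only gives you \emph{some} $x$-monotone drawing $D_1'$ weakly isomorphic to $D(S,n,2)$. Weak isomorphism does not control its cell structure, so whether a monotone star at $i_1$ with exactly the crossings prescribed by $D$ can be threaded through $D_1'$ depends on which realization you received. You notice this, but you neither formulate the strengthened inductive statement nor show that it can be maintained from step to step.

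More seriously, even for a suitable $D_1'$, the decisive claims are stated as expectations, not derived:
\begin{enumerate}
\item[(a)] that the shelling conditions for the pairs $(i_s,i_1)$ keep the star edges in the vertical order dictated by the rotation at $i_1$;
\item[(b)] that every required crossing with an edge $i_a i_b$ of $D_1'$ can be placed inside that edge's $x$-range before the sweep reaches the far endpoint of the star edge;
\item[(c)] that no unwanted crossing is ever forced.
\end{enumerate}
No local obstruction is actually reduced to a violated shelling pair. As written, this half is a plan rather than a proof. To close it you must either cite~\cite[Lemma 4.8]{kyncl_monotone} as the paper does, or supply a real argument. One option is a combinatorial encoding of $x$-monotone drawings with a fixed vertex order whose realizability can be checked on small vertex subsets. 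Another is a carefully strengthened induction that carries cell-structure information about $D-i_1$, not just its weak isomorphism class.

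\textbf{Smaller points.} In the inductive step, the pair $(n,2)$ puts $i_n$ and $i_2$ on the cell of $D(S,n,2)$ containing the witness $C$. This is the unbounded cell only if $C$ is. Since Definition~\ref{def:shellable} allows any witnessing cell, you should first re-project from a point of $C$; this does not change the weak isomorphism class.

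In the ``only if'' direction your triangle-mutation argument essentially works, with three repairs:
\begin{enumerate}
\item[(i)] Weak isomorphism gives the same rotation system only up to global reversal, so reflect $D'$ in a horizontal line first. This keeps it $x$-monotone with the same $x$-order.
\item[(ii)] The cell of $D$ obtained by tracking the unbounded cell of $D'$ through the mutations need not be the unbounded cell of $D$. The hypothesis on $i_1,i_n$ does not force this, but it is not needed, since the definition only asks for some witnessing cell.
\item[(iii)] The tracked cell can never be the vertex-free triangle being mutated. It must carry $i_n$ and $i_1$ on its boundary in $D(S,n,1)$, which is the whole intermediate drawing. You should say this explicitly, since it is what makes the invariance argument go through.
\end{enumerate}
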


We must point out that Lemma 4.8 in~\cite{kyncl_monotone} does not explicitly include the condition that $i_n$ and $i_1$ are incident with the outer face of $D$, but this is instead implicit in their formulation since their definition of a shelling sequence is slightly stronger than the original definition (namely Definition~\ref{def:shellable}), as it includes in addition that the first and last vertices of the sequence are incident with the unbounded face of the drawing.

Now the condition in Lemma~\ref{lem:charoriginal} that $(i_n,\ldots,i_1)$ is an $x$-monotone sequence of $D$ by definition means that (i) $i_1$ and $i_n$ are incident with the unbounded cell of $D$ and (ii) $D$ is weakly isomorphic to an $x$-monotone drawing with $x$-order $i_n,\ldots,i_1$. Thus the condition ``$i_1$ and $i_n$ are incident with the unbounded cell of $D$'' actually appears on both sides of Lemma~\ref{lem:charoriginal}, and so the lemma may be equivalently paraphrased as follows.

\vglue 0.35 cm
\noindent{\bf Lemma~\ref{lem:charoriginal} }(Equivalent formulation). {\em Let $D$ be a drawing of $K_n$, and let $(i_n,\ldots,i_1)$ be a sequence of the vertices of $K_n$. Then: 
\begin{center}
$D$ is weakly isomorphic to an $x$-monotone drawing with $x$-order $i_n,\ldots,i_1$
\end{center}
\[
\iff
\]
\begin{center}
$(i_n,\ldots,i_1)$ is a shelling of $D$ and the path $i_n,\ldots,i_1$ does not cross itself in $D$.
\end{center}
}
\vglue 0.35 cm

If we are only interested in knowing whether $D$ is monotone (that is, we recall, weakly isomorphic to an $x$-monotone drawing) and are not explicitly interested in the $x$-order of an $x$-monotone drawing weakly isomorphic to $D$ (if it exists), we note that this formulation of Lemma~\ref{lem:charoriginal} implies the following.

\begin{lemma}[Characterization of monotone drawings]\label{lem:cha}
A drawing $D$ of $K_n$ is monotone if and only if there is a shelling $(i_n,\ldots,i_1)$ of $D$ such that the path $i_n,\ldots,i_1$ does not cross itself in $D$.
\end{lemma}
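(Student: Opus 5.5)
Lemma~\ref{lem:cha} follows from the equivalent formulation of Lemma~\ref{lem:charoriginal} by quantifying over all vertex sequences; we prove the two directions separately.

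For the forward direction, suppose $D$ is monotone. By definition $D$ is weakly isomorphic to some $x$-monotone drawing $D'$. Let $i_n,\ldots,i_1$ be the $x$-order of $D'$. Transporting labels through the incidence-preserving bijection, we may regard $i_n,\ldots,i_1$ as a sequence of vertices of $D$. Then $D$ is weakly isomorphic to an $x$-monotone drawing with $x$-order $i_n,\ldots,i_1$. The ($\Rightarrow$) implication of the equivalent formulation of Lemma~\ref{lem:charoriginal} now shows that $(i_n,\ldots,i_1)$ is a shelling of $D$ and that the path $i_n,\ldots,i_1$ does not cross itself in $D$.

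For the converse, suppose $(i_n,\ldots,i_1)$ is a shelling of $D$ and the path $i_n,\ldots,i_1$ does not cross itself in $D$. The ($\Leftarrow$) implication of the same formulation gives that $D$ is weakly isomorphic to an $x$-monotone drawing with $x$-order $i_n,\ldots,i_1$. In particular $D$ is weakly isomorphic to an $x$-monotone drawing, so $D$ is monotone.

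The only step needing care is whether the equivalent formulation is legitimate. That formulation requires no hypothesis on the unbounded cell, so we need not check that $i_1,i_n$ lie on the unbounded cell of $D$. This rests on the paper's observation that the condition ``$i_1$ and $i_n$ are incident with the unbounded cell of $D$'' appears on both sides of Lemma~\ref{lem:charoriginal} and can therefore be dropped. One caveat is that the shelling in Definition~\ref{def:shellable} is witnessed by some cell $C$ that need not be the unbounded one. If this is a concern, it can be handled by a stereographic projection that makes $C$ the unbounded cell. Such a change preserves weak isomorphism, and it does not affect the shelling or self-crossing properties, since these depend only on the drawing on the sphere. With this in hand the lemma is immediate, and we take the paper's reformulation as given.
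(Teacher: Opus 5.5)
Your proposal is correct and matches the paper, which likewise obtains Lemma~\ref{lem:cha} in one line from the equivalent formulation of Lemma~\ref{lem:charoriginal} by quantifying over the vertex sequence. Your projection remark correctly justifies the $(\Leftarrow)$ half of that reformulation, since the witnessing cell $C$ has $i_n$ and $i_1$ on its boundary by the case $s=n$, $r=1$ of Definition~\ref{def:shellable}; the $(\Rightarrow)$ half, however, additionally needs the shelling property (or at least the fact that $i_1$ and $i_n$ share a cell) to transfer under weak isomorphism, e.g.\ via Gioan's theorem as in the proof of Lemma~\ref{lem:charac}, a point that both you and the paper absorb by taking the reformulation as given.
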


\subsection{A Characterization of Monotonicity Using Partial Shelling Sequences}\label{sub:partial}

Our aim in this section is to give a version of Lemma~\ref{lem:cha} in terms of what we call {\em partial} shelling sequences, a weaker notion than shelling sequences. To motivate this concept we note that in order to verify whether a given sequence $(i_n,\ldots,i_1)$ is a shelling (so we can use Lemma~\ref{lem:cha}) we need to verify the shelling property for {\em all pairs $s,r$} of integers with $n \ge s > r \ge 1$. The notion of a partial shelling sequence considerably weakens this requirement, as it only involves verifying the shelling property for those pairs that include either $n$ or $1$.

\begin{definition}\label{def:partial}
{\sl Let $D$ be a drawing of $K_n$. We say that a sequence of vertices $S=(i_n , \ldots , i_1)$ is a {\em partial shelling sequence} (or simply a {\em partial shelling}) of $D$ if there is a cell $C$ of $D$ such that for all $n > r > 1$, vertex $i_r$ is on the boundary of the cell of $D(S,r,1)$ that contains $C$ and it is also on the boundary of the cell of $D(S,n,r)$ that contains $C$. We say that $C$ {\em witnesses} that $(i_n,\ldots,i_1)$ is a partial shelling sequence of $D$.}
\end{definition}

We note that if $D$ is a drawing of $K_n$ and $S$ is a shelling of $D$ witnessed by $C$, then clearly $S$ is also a partial shelling of $D$ witnessed by $C$.

The converse statement is not necessarily true, but it does hold if we add the condition that $i_n$ and $i_1$ are both on the boundary of $C$. Note that this property is implicitly satisfied by definition in shelling sequences, but not in partial shelling sequences.

This is the content of our next statement, which is closely related to~\cite[Observation 4.5]{kyncl_monotone}. This lemma will allow us to give a version of Lemma~\ref{lem:cha} (namely Lemma~\ref{lem:cha2} below) in terms of partial shelling sequences instead of shelling sequences.

\begin{lemma}\label{lem:shell}
Let $D$ be a drawing of $K_n$, let $C$ be a cell of $D$, and let $S = (i_n , \ldots , i_1)$ be a sequence of the vertices of $K_n$. Then $S$ is a shelling of $D$ witnessed by $C$ if and only if $S$ is a partial shelling of $D$ witnessed by $C$ and $i_1$ and $i_n$ are in the boundary of $C$.
\end{lemma}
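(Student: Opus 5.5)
The plan is to prove both directions directly from the definitions. The only tool needed is an elementary monotonicity property of cells under deletion of vertices. Let $D'$ be a subdrawing of a drawing $D''$, obtained by deleting some vertices and their incident edges. Since $C$ is connected and disjoint from $D'' \supseteq D'$, it lies in a unique cell $C''$ of $D''$ and a unique cell $C'$ of $D'$. Moreover $C''\subseteq C'$, because $C''$ is connected and disjoint from $D'$.

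\textbf{Claim.} If a vertex $v$ of $D'$ lies on $\partial C''$, then $v$ lies on $\partial C'$.

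To prove it, note that $v$ is in the closure of $C''$, hence in the closure of $C'$. But $v\notin C'$, since $v$ is a point of the drawing $D'$. Therefore $v\in\partial C'$. I will establish this claim first and then use it repeatedly.

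For the forward direction, suppose $S$ is a shelling witnessed by $C$. Specialising the shelling condition to the pairs $(s,r)=(r,1)$ and $(s,r)=(n,r)$ for $n>r>1$ gives exactly the two requirements of Definition~\ref{def:partial}. For the pair $(s,r)=(n,1)$ we have $D(S,n,1)=D$, so the cell of $D(S,n,1)$ containing $C$ is $C$ itself. Hence $i_n$ and $i_1$ lie on $\partial C$.

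For the converse, assume $S$ is a partial shelling witnessed by $C$ and that $i_1,i_n\in\partial C$. Fix $n\ge s>r\ge 1$; we must show that $i_s$ and $i_r$ both lie on the boundary of the cell of $D(S,s,r)$ containing $C$.

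\emph{The vertex $i_s$.} If $s<n$, the partial shelling property (with $r$ there equal to $s$) puts $i_s$ on the boundary of the cell of $D(S,s,1)$ containing $C$. If $s=n$, then $i_n\in\partial C$, and $C$ is the cell of $D=D(S,n,1)$ containing $C$. In both cases, $D(S,s,r)$ is a subdrawing of $D(S,s,1)$ that still contains $i_s$, so the Claim transfers the boundary incidence to $D(S,s,r)$.

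\emph{The vertex $i_r$.} This is symmetric. If $r>1$, use that $i_r$ lies on the boundary of the cell of $D(S,n,r)$ containing $C$. If $r=1$, use $i_1\in\partial C$ in $D=D(S,n,1)$. Then $D(S,s,r)$ is a subdrawing of $D(S,n,r)$ containing $i_r$, and the Claim applies again.

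There is no serious obstacle. The only point needing care is the Claim itself: that deleting vertices only enlarges the cell containing $C$, and that a surviving vertex on the boundary of the smaller cell remains on the boundary of the larger one. The boundary cases $s=n$ and $r=1$ are exactly where the extra hypothesis $i_1,i_n\in\partial C$ is used.
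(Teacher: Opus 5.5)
Your proof is correct and follows the same route as the paper. The forward direction is read off from the definitions, with $(s,r)=(n,1)$ giving $i_1,i_n\in\partial C$. The converse uses $C_{n,r}\subseteq C_{s,r}$ and $C_{s,1}\subseteq C_{s,r}$, so a surviving vertex on the boundary of the smaller cell stays on the boundary of the larger one.

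You are in fact slightly more careful than the paper. You justify the boundary-transfer step via closures, and you explicitly treat the cases $s=n$ and $r=1$, where the hypothesis $i_1,i_n\in\partial C$ replaces the partial-shelling condition; the paper leaves these cases implicit.
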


\begin{proof}
The ``only if'' part follows trivially from the definitions of shellings and partial shellings.

For the ``if'' part we suppose that $S$ is a partial shelling of $D$ witnessed by a cell $C$ of $D$, and that $i_1$ and $i_n$ are in the boundary of $C$. In order to show that $S$ is a shelling witnessed by $C$ we let $s,r$ be integers such that $n \ge s > r \ge 1$, and prove that both $i_s$ and $i_r$ are on the boundary $\partial(C_{s,r})$ of the cell $C_{s,r}$ of $D(S,s,r)$ that contains $C$.

To prove this we first note that since $S$ is a partial shelling witnessed by $C$ it follows that $i_r$ is on the boundary $\partial(C_{n,r})$ of the cell $C_{n,r}$ of $D(S,n,r)$ that contains $C$. Similarly, $i_s$ is on the boundary $\partial(C_{s,1})$ of the cell $C_{s,1}$ of $D(S,s,1)$ that contains $C$. 

Since both $C_{n,r}$ and $C_{s,r}$ contain $C$ and $D(S,s,r) \subseteq D(S,n,r)$ it follows that $C_{n,r}\subseteq C_{s,r}$. Since $i_r$ is in $\partial(C_{n,r})$ it then follows that $i_r$ is in $\partial{(C_{s,r})}$, as claimed. A totally analogous argument shows that also $i_s$ is on $\partial(C_{s,r})$.
\end{proof}

\begin{lemma}[Characterizing monotonicity using partial shellings]\label{lem:cha2}
A drawing $D$ of $K_n$ is monotone if and only if there exist a sequence $(i_n,\ldots,i_1)$ of the vertices and a cell $C$ of $D$ such that:

\begin{enumerate}

\item[(m1)] $i_1$ and $i_n$ are in the boundary of $C$;

\item[(m2)] $(i_n,\ldots,i_1)$ is a partial shelling witnessed by $C$; and

\item[(m3)] the path $i_n,\ldots,i_1$ does not cross itself in $D$.

\end{enumerate}

\end{lemma}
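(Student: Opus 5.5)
This is a direct combination of Lemma~\ref{lem:cha} and Lemma~\ref{lem:shell}. Both implications go through the notion of a shelling witnessed by a cell $C$; the only subtlety is that in Lemma~\ref{lem:cha} the witnessing cell is implicit, so we must be careful to use the same cell $C$ throughout.

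\emph{Forward direction.} Suppose $D$ is monotone. By Lemma~\ref{lem:cha} there is a shelling $S=(i_n,\ldots,i_1)$ of $D$ such that the path $i_n,\ldots,i_1$ does not cross itself in $D$; this gives (m3). By Definition~\ref{def:shellable}, $S$ is witnessed by some cell $C$ of $D$. Applying the ``only if'' part of Lemma~\ref{lem:shell} to $S$ and $C$ shows that $S$ is a partial shelling witnessed by $C$, which is (m2), and that $i_1,i_n\in\partial C$, which is (m1). Alternatively, (m1) follows directly from the shelling condition with $s=n$ and $r=1$: since $D(S,n,1)=D$, the cell containing $C$ is $C$ itself.

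\emph{Backward direction.} Suppose a sequence $S=(i_n,\ldots,i_1)$ and a cell $C$ satisfy (m1)--(m3). By (m1) and (m2), the ``if'' part of Lemma~\ref{lem:shell} shows that $S$ is a shelling of $D$ witnessed by $C$. Together with (m3), Lemma~\ref{lem:cha} then yields that $D$ is monotone.

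There is no genuine obstacle here. The one point worth checking is that Lemma~\ref{lem:cha} requires only ``a shelling'', without reference to the unbounded cell. This is the reason for passing through the equivalent formulation of Lemma~\ref{lem:charoriginal}: it removes the unbounded-cell condition, so that an arbitrary witnessing cell $C$ suffices.
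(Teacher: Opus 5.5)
Your proposal is correct and is the same argument as the paper's, whose proof of Lemma~\ref{lem:cha2} consists only of the sentence ``combine Lemmas~\ref{lem:cha} and~\ref{lem:shell}.'' You spell out that combination in both directions, keeping the same witnessing cell $C$ throughout.
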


\begin{proof}
The statement follows immediately by combining Lemmas~\ref{lem:cha} and~\ref{lem:shell}.
\end{proof}




\section{Proof of Lemma~\ref{lem:charac}}\label{sec:proofcharac}

\begin{proof}[Proof of Lemma~\ref{lem:charac}]
For the ``only if'' part suppose that $D$ is a monotone drawing of $K_n$. That is, $D$ is weakly isomorphic to an $x$-monotone drawing $D'$ of $K_n$. If $i_n,\ldots,i_1$ is the $x$-order of $D'$, then the $x$-monotonicity of $D'$ clearly implies that (M1)--(M3) hold in $D'$. Now using Gioan's theorem and that two drawings are weakly isomorphic if and only if their rotation systems are equivalent (see the discussion just before Corollary~\ref{cor:mon2}) it is not difficult to see that since (M1)--(M3) hold in $D'$ then they also hold in $D$. 

To prove the ``if'' part we suppose that $D$ is a drawing of $K_n$ such that there is a sequence $(i_n,i_{n-1},\ldots,i_1)$ of its vertices that satisfies (M1)--(M3), and we need to show that then $D$ is monotone. For convenience let us recall these properties:

\begin{enumerate}

\item[(M1)] there is a cell $C$ of $D$ that has both $i_1$ and $i_n$ in its boundary;

\item[(M2)] for each $r=n,n-1,\ldots,2$, there is an $i_r$-wedge whose vertex set is $\{i_{r-1},\ldots,i_1\}$; and 

\item[(M3)] for each $r=n,n-1,\ldots,3$, the edge $i_r i_{r-1}$ does not cross any edge with both endvertices in $\{i_{r-2},\ldots,i_1\}$. 
\end{enumerate}

In view of Lemma~\ref{lem:cha2}, in order to prove that $D$ is monotone it suffices to show that the sequence $(i_n,\ldots,i_1)$ and the cell $C$ satisfy (m1)--(m3) of Lemma~\ref{lem:cha2}. Now (m1) is exactly the same as (M1), and so the assumption that (M1) holds implies that (m1) holds. We also note that (M3) implies that the path $i_n,\ldots,i_1$ does not cross itself in $D$, and so (m3) also holds. 

Thus it only remains to show that (m2) holds, that is, that $(i_n,\ldots,i_1)$ is a partial shelling witnessed by $C$. Thus we need to show that if $r$ is an integer such that $n > r > 1$ then (a) vertex $i_r$ is on the boundary of the cell of $D(S,r,1)$ that contains $C$; and (b) $i_r$ is on the boundary of the cell of $D(S,n,r)$ that contains $C$.

We start with (a). For each $n > r > 1$ we let $C_r$ denote the cell of $D(S,r,1)$ that contains $C$. Thus our goal is to show that $i_r$ is on the boundary of $C_r$. We start with the case $r=n-1$. Since $i_n$ is in the boundary of $C$ and $C_{n-1}$ contains $C$, it follows that $C_{n-1}$ contains the union of the cells that have $i_n$ in their boundary. Since by (M3) the edge $i_n i_{n-1}$ is not crossed by any edge in $D(S,n-1,1)$ then the whole edge $i_n i_{n-1}$ must be contained in $C_{n-1}$, and from this it follows that $i_{n-1}$ must be in the boundary of $C_{n-1}$.

An inductive application of this argument, using for $r=n-2,\ldots,2$ that by (M3) the edge $i_{r+1} i_{r}$ is not crossed in $D(S,r,1)$, yields that $i_r$ is in the boundary of $C_r$ for every such $r$. Thus (a) follows.

Finally, to prove (b) we need to show that for each $r=n-1,\ldots,2$, vertex $i_r$ is incident with the cell $B_r$ of $D(S,n,r)$ that contains $C$. We note that the proof of (a) only relied in Property (M3), namely that for each $r=n-1,\ldots,2$, the edge $i_{r+1} i_{r}$ does not cross any ``forward'' edge (an edge $i_t i_s$ where $t$ and $s$ are both smaller than $r$). It is easy to see that if the analogous property in the opposite (``backward'') direction holds, then totally analogous arguments imply (b). Therefore in order to prove (b) it suffices to show that for each $r=n-1,\ldots,2$, the edge $i_r i_{r-1}$ does not cross any edge $i_t i_s$ where $t$ and $s$ are both {\em larger} than $r$. As we shall see, this is the only place in the proof in which we make use of (M2).

Seeking a contradiction, suppose that for some $r=n-1,\ldots,2$ the edge $i_r i_{r-1}$ crosses an edge $i_t i_s$ with $t > r$ and $s > r$. We choose the labels $s$ and $t$ so that $t > s$. Note that $t$ cannot be $s+1$, as this would mean that $i_ti_s=i_{s+1} i_{s}$ crosses $i_r i_{r-1}$: since $r$ and $r-1$ are  smaller than $s$, this would contradict (M3) for the edge $i_{s+1} i_{s}$. Thus $n \ge t > s+1 > s > r > r-1$. 

To derive the contradiction we start by noting that up to equivalence there is only one drawing of the $K_4$ induced by $\{i_t,i_s,i_r,i_{r-1}\}$ where $i_t i_s$ crosses $i_r i_{r-1}$, namely the one in Figure~\ref{fig:445}(a) (perhaps with the labels of $i_r$ and $i_{r-1}$ interchanged, but this is irrelevant for the upcoming arguments). Thus without loss of generality we may assume that the restriction of $D$ to this $K_4$ is exactly as in that figure.

\begin{figure}[ht!]
\def\ta#1{{\Scale[1.6]{#1}}}
\def\ta#1{{\Scale[5.5]{#1}}}
\def\Somea{{\Scale[5.5]{\text{\rm (a)}}}}
\def\Someb{{\Scale[5.5]{\text{\rm (b)}}}}
\def\Somec{{\Scale[5.5]{\text{\rm (c)}}}}
\centering
\scalebox{0.15}{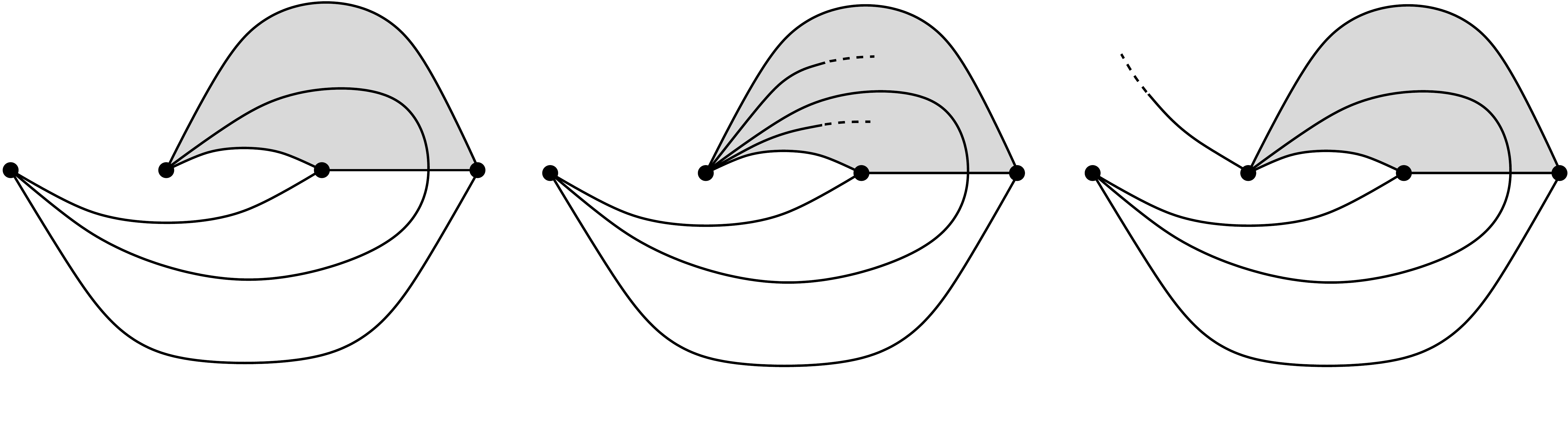}
\caption{Illustration of the conclusion of the proof of Lemma~\ref{lem:charac}.}
\label{fig:445}
\end{figure}

We claim that the edge $i_{s+1} i_s$ must leave $i_s$ in one of the two gray regions in Figure~\ref{fig:445}(b), as hinted in the illustration. This will provide the required contradiction, as this implies that (some edge in) the path $i_t,i_{t-1},\ldots,i_{s+1},i_s$ crosses the edge $i_r i_{r-1}$, contradicting (M3).

To prove that the edge $i_{s+1} i_s$ must leave $i_s$ in one of the gray regions, by way of contradiction suppose that this is not the case. Then $i_{s+1} i_s$ must leave $i_s$ as shown in part (c) of that figure, and so the rotation at vertex $i_s$ contains the vertices $i_{r-1},i_{t}, i_{r},i_{s+1}$ in this cyclic order. Since $t$ and $s+1$ are both larger than $s$, and $r$ and $r-1$ are both smaller than $s$, it follows that no $i_s$-wedge has $\{i_{s-1},\ldots,i_1\}$ as its vertex set, contradicting (M2).
\end{proof}

\section{Proof of Lemma~\ref{lem:n5}}\label{sec:algo}

The algorithm claimed in Lemma~\ref{lem:n5} relies crucially on three data structures that allow us to perform certain queries in constant time. The existence of these data structures is stated in the next three lemmas, whose proofs are deferred to Section~\ref{sec:proofsds1ds2ds3}.

We encourage the reader to skip these lemmas for the time being and come back to them until they are invoked, at the very end of the proof of Lemma~\ref{lem:n5}.

\begin{lemma}[First data structure]\label{lem:ds1}
Given a drawing $D$ of $K_n$, we can construct in $O(n^2)$ time a data structure to answer the following queries in constant time.
\begin{enumerate}

\item[(Q1)] Given a vertex $k$ and a wedge $\ww$, is $k$ in $\ww$?

\item[(Q2)] Given two edges of $K_n$, do these edges cross each other in $D$?

\end{enumerate}
\end{lemma}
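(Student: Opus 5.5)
We build two tables of total size $O(n^2)$, one for each kind of query, using the cell structure and extended rotation system, which we assume are given.

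\textbf{Query (Q1).} For each vertex $i$ we read off the rotation $\cca{i}$ as a cyclic sequence of the $n-1$ other vertices. We store a position array $\mathrm{pos}_i[k]\in\{0,\ldots,n-2\}$, giving the index of $k$ in $\cca{i}$ relative to a fixed starting vertex. Doing this for all $i$ gives an $n\times n$ table built in $O(n^2)$ time.

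A wedge $\ww=\H{i}{j}{r}$ is given by its hub $i$, initial vertex $j$ and length $r$. A vertex $k\neq i$ lies in $\ww$ if and only if
\[
(\mathrm{pos}_i[k]-\mathrm{pos}_i[j]) \bmod (n-1) < r .
\]
The hub itself is never in the wedge. This test is a constant number of arithmetic operations, so (Q1) takes constant time.

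\textbf{Query (Q2).} We precompute the crossing information. Two edges with a common endpoint never cross in a simple drawing, so only disjoint pairs $ab$, $cd$ matter. Whether $ab$ crosses $cd$ is determined by the rotation system restricted to $\{a,b,c,d\}$. However, there are $\Theta(n^4)$ such quadruples, far more than the $O(n^2)$ budget, so tabulating them is not an option.

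Instead we store, for each edge $e$, the list of edges it crosses. The extended rotation system gives, for each edge, the sequence of edges crossing it. The total number of crossings in a drawing of $K_n$ is $O(n^4)$, though, so even writing these lists exceeds $O(n^2)$.

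This is the main obstacle. An explicit table of all edge pairs has $\Theta(n^4)$ entries, and a constant-time query seems to require either that table or a compact implicit encoding.

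The approach I would try first is to make the answer a constant-time computation from $O(n^2)$ stored data. The crossing status of $ab$ and $cd$ is a function of the rotation system on four vertices. I would try to show it is determined by the relative positions of $c,d$ in $\cca{a}$ and $\cca{b}$, together with the positions of $a,b$ in $\cca{c}$ and $\cca{d}$. These are all available in constant time from the $\mathrm{pos}$ tables. Concretely, for $K_4$ the crossing pair is determined by the rotations at the four vertices, and there is a known case analysis. I would encode it as follows:

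1. Normalize the four rotations, each a cyclic order of three vertices, into orientation bits. Each bit is a single comparison of $\mathrm{pos}$ values.
2. Use a fixed lookup table with $2^4$ entries, mapping the bits to whether $ab$ crosses $cd$.

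If this implicit rule works, the data structure is just the $\mathrm{pos}$ tables and the whole construction runs in $O(n^2)$ time. Verifying that the four orientation bits suffice, by checking the finitely many drawings of $K_4$, is the step I expect to need care.
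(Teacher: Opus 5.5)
Your proposal is correct and is essentially the paper's proof: the paper also stores, for each vertex $i$, the index in its rotation of every other vertex ($O(n^2)$ time in total), answers (Q1) by comparing indices, and answers (Q2) by reading off the rotation system of the induced $K_4$ in constant time and invoking the known fact (cited to Kyn\v{c}l and Pach--T\'oth) that the rotation system of a drawing of $K_4$ determines which of its edges cross. The step you flagged as needing care is exactly this cited fact, and your four orientation bits are precisely that $K_4$'s rotation system, since a rotation on three vertices has only two possible cyclic orders; hence a fixed $16$-entry table does suffice, and the fact that each bit needs a constant number of comparisons rather than a single one is harmless.
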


\begin{lemma}[Second data structure]\label{lem:ds2}
Suppose that the data structure from Lemma~\ref{lem:ds1} has been constructed. Given a drawing $D$ of $K_n$, we can construct in $O(n^5)$ time a data structure to answer the following query in constant time. 

Let $\ww$ be a wedge, and let $k$ be a vertex in $\ww$. Is there a wedge $\zz$ with hub $k$ such that $\set{\ww}=\set{\zz}\cup\{k\}$?

In addition, the data structure provides $\zz$, if it exists. (We note that if $\zz$ exists, then it is unique.)
\end{lemma}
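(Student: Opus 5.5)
We build the table for all query inputs directly; the constant-time answer is then a lookup. There are $O(n^3)$ wedges, since a wedge $\H{i}{j}{r}$ is determined by its hub $i$, initial vertex $j$ and length $r$. For each wedge there are at most $n$ choices of $k$, so there are $O(n^4)$ query pairs $(\ww,k)$. Our budget of $O(n^5)$ therefore allows $O(n)$ time per pair, and a naive procedure suffices. We store the answers in an array indexed by $(i,j,r,k)$. Each entry holds either ``no'' or the triple $(k,j',r')$ describing $\zz=\H{k}{j'}{r'}$.

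First we justify uniqueness. Write $T=\set{\ww}\setminus\{k\}$, which has $r-1$ elements. A $k$-wedge is a cyclically consecutive block of $\cca{k}$. Suppose $\set{\zz}\cup\{k\}=\set{\ww}$ with $k\notin\set{\zz}$; then $\set{\zz}=T$, so $r'=r-1$. If $r-1<n-1$, a proper consecutive block of a cyclic sequence is determined by its set: its initial vertex is the unique element of $T$ whose cyclic predecessor in $\cca{k}$ lies outside $T$. If $r-1=n-1$ the set is everything, which cannot happen since $k\notin T$ forces $|T|\le n-2$. The degenerate case $r=1$ (so $T=\emptyset$) is handled separately; we answer ``no'', since wedges have length at least $1$.

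The per-pair test runs in $O(n)$ time. For fixed $(\ww,k)$ with $\ww=\H{i}{j}{r}$, we mark the elements of $T$ using (Q1) of Lemma~\ref{lem:ds1} for each of the $n$ vertices, excluding $k$; this takes $O(n)$ time. Note that $i\notin\set{\ww}$, and $i\neq k$ because $k\in\ww$. We then traverse $\cca{k}$ once (length $n-1$) and count the marked positions whose cyclic predecessor is unmarked. $T$ is consecutive in $\cca{k}$ if and only if this count is exactly $1$, and the unique such position gives $j'$. We must also confirm that every element of $T$ appears in $\cca{k}$, which is automatic because $\cca{k}$ contains all vertices except $k$.

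The total time is $O(n^4)\cdot O(n)=O(n^5)$. We assume the rotations $\cca{k}$ are available as arrays, which follows from having the extended rotation system. The table has $O(n^4)$ entries, so each query is a constant-time lookup.

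The only mild obstacle is keeping the preprocessing within $O(n^5)$. Naively testing membership in $\ww$ by scanning $\cca{i}$ costs $O(n)$ per vertex, which would give $O(n^2)$ per pair and $O(n^6)$ overall. Using the $O(1)$ membership query (Q1) from Lemma~\ref{lem:ds1} avoids this. Alternatively, for fixed $(i,j)$ we can grow $r$ incrementally and maintain the marks, which makes the marking step amortized $O(1)$ per length.
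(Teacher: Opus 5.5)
Your argument is correct, and it takes a more direct route than the paper. You handle each of the $O(n^4)$ valid pairs $(\ww,k)$ independently. You mark $T=\set{\ww}\setminus\{k\}$ with $O(n)$ calls to (Q1) and scan $\cca{k}$ once. This uses the fact that a nonempty proper subset of a cyclic sequence is a consecutive block exactly when one of its elements has its cyclic predecessor outside it.

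The paper instead runs a dynamic program over the rank of $(\ww,k)$:
\begin{itemize}
\item It stores $\kint{k}{\set{\ww}\setminus\{k\}}$, the family of shortest $k$-wedges containing $\set{\ww}\setminus\{k\}$.
\item It computes this family from a pair of rank one less by inserting an end vertex of $\ww$ (Claim~\ref{cla:aux1}).
\item It answers the query through Observation~\ref{obs:kint}.
\end{itemize}
The paper's update also costs $O(n)$ per pair, so both routes give $O(n^5)$. The recursion gains nothing over your brute force, whereas your test needs no auxiliary notion and its correctness is immediate.

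Your route is also the safer one. The proof of Claim~\ref{cla:aux1} rests on the assertion that every wedge in $\kint{k}{U\cup\{\ell\}}$ contains some wedge of $\kint{k}{U}$. This assertion fails. Take $\cca{k}=\cycper{a_0,a_1,\ldots,a_5}$, $U=\{a_0,a_4\}$ and $\ell=a_2$. Then $\kint{k}{U}=\{\perm{a_4,a_5,a_0}\}$, whereas $\perm{a_0,a_1,a_2,a_3,a_4}\in\kint{k}{U\cup\{\ell\}}$. If you then insert $a_1$ and $a_3$, the paper's update ends with two wedges of size $6$, even though $\{a_0,\ldots,a_4\}$ is the vertex set of $\H{k}{a_0}{5}$.

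This is exactly the insertion order the paper's recursion uses for the query $(\H{i}{a_3}{6},k)$ when $a_5=i$ and $\cca{i}=\cycper{k,a_3,a_1,a_2,a_4,a_0}$. That configuration occurs in the straight-line drawing of $K_7$ with $k=(0,0)$, $i=(0,-10)$, $a_0=(-5,5)$, $a_1=(5,10)$, $a_2=(10,8)$, $a_3=(1,0.1)$, $a_4=(10,0.05)$. There the paper's structure wrongly reports that no $\zz$ exists, while your per-pair scan answers correctly.
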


\begin{lemma}[Third data structure]\label{lem:ds3}
Suppose that the data structure from Lemma~\ref{lem:ds1} has been constructed. Given a drawing $D$ of $K_n$, we can construct in $O(n^4)$ time a data structure to answer the following query in constant time.

Let $\ww$ be a wedge, let $i$ be its hub vertex, and let $k$ be a vertex in $\ww$. Is the edge $i\, k$ crossed by some edge that has both endvertices in $\ww\setminus\{k\}$?
\end{lemma}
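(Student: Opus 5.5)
We need to prove Lemma~\ref{lem:ds3}: an $O(n^4)$ preprocessing giving constant-time answers to whether the edge $ik$ is crossed by an edge with both endvertices in $\set{\ww}\setminus\{k\}$, where $i$ is the hub of $\ww$ and $k\in\ww$.

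The plan is to precompute, for every hub $i$, every initial vertex $j$, and every vertex $k\neq i$, a single threshold. Set $\ww=\H{i}{j}{r}$ and let $\ww$ grow by increasing $r$. The predicate ``$ik$ is crossed by an edge with both endvertices in $\set{\H{i}{j}{r}}\setminus\{k\}$'' is monotone in $r$, because the vertex sets are nested: $\set{\H{i}{j}{r}}\subseteq\set{\H{i}{j}{r+1}}$. So there is a least length $R(i,j,k)$ at which the predicate first becomes true (take $R=\infty$ if it never does). The query then reduces to the comparison $r\ge R(i,j,k)$, together with the membership test $k\in\ww$ supplied by (Q1) of Lemma~\ref{lem:ds1}. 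The table has $O(n^3)$ entries, so it is enough to fill each entry in $O(n)$ amortized time, or to fill each pair $(i,k)$ in $O(n^2)$ time.

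To compute $R(i,j,k)$ for fixed $i,j,k$, walk along $\cca{i}$ starting at $j$, adding vertices $v_1=j,v_2,\ldots$ one at a time. At step $r$, test whether a newly available edge crosses $ik$. These are the edges $v_rv_m$ with $m<r$ and $v_r,v_m\neq k$, and each test is one (Q2) query of Lemma~\ref{lem:ds1}. Stop at the first crossing found. Done naively this costs $O(n^2)$ for each triple, which gives $O(n^5)$ total, too much by a factor of $n$.

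To save that factor, fix $i$ and $k$ and precompute, for every vertex $v$, the position $p(v)$ in $\cca{i}$ of the \emph{first} crossing partner. Concretely, for each pair $u,v$ of vertices other than $i$ and $k$ such that $uv$ crosses $ik$, an edge $uv$ becomes available in $\H{i}{j}{r}$ exactly when both $u$ and $v$ lie within the first $r$ positions counted cyclically from $j$. That is, its entry time with respect to the start $j$ is $\max(\mathrm{pos}_j(u),\mathrm{pos}_j(v))$, where $\mathrm{pos}_j$ denotes the cyclic offset from $j$. So
\[
R(i,j,k)=\min_{uv \text{ crosses } ik}\max(\mathrm{pos}_j(u),\mathrm{pos}_j(v)).
\]
For fixed $(i,k)$ there are $O(n^2)$ crossing pairs, listed in $O(n^2)$ time with (Q2), and there are $n$ choices of $j$. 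Evaluating the minimum directly gives $O(n^3)$ per $(i,k)$, hence $O(n^5)$ again. This is therefore the main obstacle, and I would attack it as follows.

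For each vertex $u$, let $f_j(u)$ denote the partner $v$ of $u$ that minimizes $\mathrm{pos}_j(v)$. Then the quantity we need is $\min_u \max(\mathrm{pos}_j(u),\mathrm{pos}_j(f_j(u)))$. When $j$ advances by one position in the rotation, the cyclic order of partners shifts only by moving the old $j$ to the end of the order. I expect that maintaining, for each $u$, its partners in the order of the rotation makes each $f_j(u)$ updatable in amortized $O(1)$ time, so that each $j$ costs $O(n)$ and each $(i,k)$ costs $O(n^2)$.

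If that bookkeeping proves awkward, there is a fallback. The bound $O(n^4)$ in the statement allows $O(n)$ time per triple $(i,j,k)$. For each $u$, I would precompute in $O(n^2)$ time per $(i,k)$ the sorted cyclic list of its partners, and then find $f_j(u)$ for a given $j$ by a successor lookup in a precomputed array of size $n$ over the rotation positions. That lookup is $O(1)$, which yields $O(n)$ per triple, $O(n^3)$ triples, and $O(n^4)$ overall, as required.

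Finally, correctness of a query also needs the exclusion of edges incident to $k$. Such edges share the endpoint $k$ with $ik$, so by simplicity they never cross it, and they are automatically irrelevant. With that observation the comparison $r\ge R(i,j,k)$ answers the query in constant time.
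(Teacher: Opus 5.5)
Your proposal is correct, and it reaches the $O(n^4)$ bound through a different organization than the paper's proof.

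The paper introduces $\kl(i,k,\ell)$. This is the number of steps needed, walking the rotation at $i$ backwards from $\ell$, to reach the first $m$ with $\ell\,m$ crossing $i\,k$. It costs $O(n)$ queries (Q2) per triple, so $O(n^4)$ overall. The paper then fills in a good/bad bit for each of the $O(n^4)$ valid pairs $(\ww,k)$ by dynamic programming on the size of $\ww$. Going from $\H{i}{j}{r}$ to $\H{i}{j}{r+1}$ adds only the edges from the last vertex $j''$ to $\H{i}{j}{r}$, and one of them crosses $i\,k$ if and only if $0<\kl(i,k,j'')\le r$. When $k=j''$ there is a mirror-image case, prepending $j$ to $\H{i}{j'}{r}$.

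You instead use monotonicity in $r$ to collapse all wedges with the same hub and initial vertex into one threshold $R(i,j,k)$. You evaluate it as $\min_u\max(\mathrm{pos}_j(u),\mathrm{pos}_j(f_j(u)))$, using nearest-partner successor arrays anchored at $j$. Your fallback is already rigorous: each successor array costs $O(n)$ to build, so $O(n^2)$ per pair $(i,k)$, and then each triple $(i,j,k)$ costs $O(n)$. The hedged amortized-update plan is therefore unnecessary. It would also work, since $f_j(u)$ changes only when it equals the vertex $j$ being passed over.

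Both proofs rest on the same primitive, the nearest crossing partner of a vertex along the rotation at $i$. In fact the paper's $\kl$ gives your threshold directly: $R(i,j,k)$ is the least $r$ with $0<\kl(i,k,v_r)\le r-1$, where $v_r$ is the $r$-th vertex of the rotation at $i$ counted from $j$.

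What each route buys:
\begin{itemize}
\item Your route produces a final table of $O(n^3)$ thresholds instead of $O(n^4)$ bits. It also handles a single uniform case, with no need for the forward analogue of $\kl$ that the paper's mirror case implicitly requires.
\item The paper's route follows the same rank-by-rank dynamic programming scheme used for Lemma~\ref{lem:ds2}, which keeps Section~\ref{sec:proofsds1ds2ds3} structurally uniform.
\end{itemize}
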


\subsection{Proof of Lemma~\ref{lem:n5}}\label{sub:proofn5}

We recall that Lemma~\ref{lem:n5} claims the existence of an algorithm that, given a drawing $D$ of $K_n$, verifies in $O(n^5)$ time whether there is a sequence $(i_n,\ldots,i_1)$ of the vertices that satisfies the following properties:

\begin{enumerate}

\item[(M1)] there is a cell $C$ of $D$ that has both $i_1$ and $i_n$ in its boundary;

\item[(M2)] for each $r=n,n-1,\ldots,2$, there is an $i_r$-wedge whose vertex set is $\{i_{r-1},\ldots,i_1\}$; and

\item[(M3)] for each $r=n,n-1,\ldots,3$, the edge $i_r i_{r-1}$ does not cross any edge with both endvertices in $\{i_{r-2},\ldots,i_1\}$.

\end{enumerate}


To deal with (M2) and (M3) we extend these two properties to sequences with fewer than $n$ vertices, as follows.

\begin{definition}[Good sequences of vertices]\label{def:goodseq}
{\sl Let $D$ be a drawing of $K_n$, and let $t\in\{n,n-1,\ldots,2\}$. We say that a sequence $(i_t,i_{t-1},\ldots,i_1)$ of vertices is {\em good in} $D$ if the following hold:}

\begin{enumerate}

\item[(G1)] {\sl for each $r=t,t-1,\ldots,2$, there is an $i_r$-wedge whose vertex set is $\{i_{r-1},\ldots,i_1\}$; and}

\item[(G2)] {\sl the path $i_t,\ldots,i_1$ does not cross itself in $D$.}

\end{enumerate}

\end{definition}

Note that for the case $t=n$ Properties (G1) and (G2) become exactly (M2) and (M3). 

\begin{proof}[Proof of Lemma~\ref{lem:n5}]
Let $D$ be a drawing of $K_n$. We start by constructing, for later use in this proof, the three data structures from Lemmas~\ref{lem:ds1}--~\ref{lem:ds3}. According to these lemmas the three data structures can be constructed in $O(n^5)$ time.

Having these data structures at our disposition, our goal is to show the existence of an algorithm that finds out in $O(n^5)$ time whether there is a sequence $(i_n,\ldots,i_1)$ that satisfies (M1)--(M3). 

At a high level our strategy consists of stating a sequence of claims (Claims A--E). We argue that Claim E $\implies$ Claim D $\implies$ Claim C $\implies$ Claim B $\implies$ Claim A $\implies$ Lemma~\ref{lem:n5}, and finish the proof by proving Claim E.

For the rest of the proof the first vertex in a sequence of vertices is the {\em head} of the sequence, and the last one is its {\em tail}.

\vglue 0.3 cm
\noindent{\bf Claim A. }{(Implies Lemma~\ref{lem:n5}). }{\sl There is an $O(n^5)$ time algorithm that finds all pairs $i,j$ of vertices such that there is a good sequence of length $n$ with head $i$ and tail $j$.}
\vglue 0.3 cm

To see that Claim A indeed implies the lemma, we run in $O(n^5)$ time the algorithm given by Claim A and find all pairs $i,j$ such that there is a good sequence of length $n$ with head $i$ and tail $j$. As we noted immediately after Definition~\ref{def:goodseq}, these are precisely the sequences $(i=i_n,i_{n-1},\ldots,i_2,i_1=j)$ that satisfy (M2) and (M3). Thus in order to verify whether there is a sequence that satisfies (M1)--(M3) it suffices to check, for each such pair $i,j$, whether there is a cell that has both $i$ and $j$ in its boundary. Regardless of the specific format in which we are given $D$ (its cell structure or its extended rotation system) it is easy to see that this can be tested for each fixed pair in $O(n^3)$ time. Since there are $O(n^2)$ pairs to test, it follows that this last step can also be performed in $O(n^5)$ time. 

Thus Claim A implies Lemma~\ref{lem:n5}. We now note that since there are $n$ vertices $j$ that can be the tail vertex of a good sequence of length $n$, in order to prove Claim A it suffices to show the following.   

\vglue 0.3 cm
\noindent{\bf Claim B. }{(Implies Claim A). }{\sl Let $j$ be a fixed vertex. There is an $O(n^4)$ time algorithm that finds all vertices $i$ such that there is a good sequence of length $n$ with head $i$ and tail $j$.}
\vglue 0.3 cm

We now note that the following statement is strictly stronger than Claim B:

\vglue 0.3 cm
\noindent{\bf Claim C. }{(Implies Claim B). }{\sl Let $j$ be a fixed vertex. There is an $O(n^4)$ time algorithm that finds for {\rm every} $t=2,\ldots,n$ all vertices $i$ such that there is a good sequence of length $t$ with head $i$ and tail $j$.}
\vglue 0.3 cm

In the rest of the proof we also attribute the goodness property (which applies to sequences) to wedges by saying that a wedge $\ww$ with hub $i$ is {\em good\,} if there is a good sequence $(i=i_t,i_{t-1},\ldots,i_2,i_1)$ (that is, with head vertex $i$) where $\set{\ww}= \{i_{t-1},\ldots,i_1\}$. We say that the wedge $\ww$ is {\em good with tail $i_1$}.

We note that a good wedge can be of any size from $1$ to $n-1$. Indeed, a wedge with hub $i$ that only contains one vertex $j$ is always good with tail $j$, as the sequence $(i,j)$ is clearly good.

Since wedges are linear permutations it seems worth emphasizing that saying that a wedge $\ww$ is good with tail $j$ does {\em not} mean that $j$ is necessarily the last vertex of the linear permutation $\ww$. It must also be noted that a given wedge may be good with more than one tail vertex, as there might be more than one good sequence that witnesses the goodness of the wedge.

The definition of a good wedge implies that for $t=2,\ldots,n$ there is a good sequence of length $t$ with head $i$ and tail $j$ if and only if there is a wedge of size $t-1$ with hub $i$ that is good with tail~$j$. From this it follows that the next statement implies Claim C:

\vglue 0.3 cm
\noindent{\bf Claim D. }{(Implies Claim C). }{\sl Let $j$ be a fixed vertex. There is an $O(n^4)$ time algorithm that finds for {\rm every} $s=1,\ldots,n-1$ all the wedges of size $s$ that are good with tail $j$.}
\vglue 0.3 cm

We now state the following:

\vglue 0.3 cm
\noindent{\bf Claim E. }{(Implies Claim D). }{\sl Let $j$ be a fixed vertex. Suppose that for some $s\in\{1,\ldots,n-2\}$ we have found all the wedges of size $s$ that are good with tail $j$. Then there is an $O(n)$ time algorithm that verifies whether a given wedge of size $s+1$ is good with tail $j$.}
\vglue 0.3 cm

To see that Claim E indeed implies Claim D we start by noting that it is trivial to find all the wedges of size $1$ that are good with tail $j$: these are simply all the wedges of size $1$ whose only vertex is $j$.

Suppose now that Claim E holds. Note that for each $s=1,\ldots,n-2$ there are $O(n^2)$ wedges of size $s+1$. Thus Claim E implies that if for some $s\in\{1,\ldots,n-2\}$ all the wedges that are good with tail $j$ have been found then all the wedges of size $s+1$ that are good with tail $j$ can be found in $O(n^3)$ time (each of the $O(n^2)$ wedges gets tested in $O(n)$ time). Therefore for each fixed $s\in\{1,\ldots,n-1\}$ all the wedges of size $s$ that are good with tail $j$ can be found in $O(n^3)$ time, and from this Claim D immediately follows.

We have thus shown that Claim E $\implies$ Claim D $\implies$ Claim C $\implies$ Claim B $\implies$ Claim A $\implies$ Lemma~\ref{lem:n5}. We finish the proof by proving Claim E.

Let $j$ be a fixed vertex, and let $s\in\{1,\ldots,n-2\}$. Suppose that we have found all the wedges of size $s$ that are good with tail $j$, and let $\ww$ be a wedge of size $s+1$. Our goal is to show that we can test in $O(n)$ time whether $\ww$ is good with tail $j$.

Using the definition of a good wedge it is not difficult to verify that $\ww$ is good with tail $j$ if and only if there is a wedge $\zz$ of size $s$ such that:

\begin{enumerate}

\item[(i)] the hub $k$ of $\zz$ is in $\ww$; 

\item[(ii)] $\set{\ww}=\set{\zz}\cup \{k\}$;

\item[(iii)] the wedge $\zz$ is good with tail $j$; and 

\item[(iv)] the edge that joins the hub $i$ of $\ww$ to $k$ does not cross any edge with both endvertices in $\zz$. 

\end{enumerate}

In view of this, in order to prove Claim E it suffices to show that the existence of a wedge $\zz$ that satisfies (i)--(iv) can be decided in $O(n)$ time.

In order to achieve this we consider each vertex $k$ different from $i$ and test whether ($\dag$) {\em there is a wedge $\zz$ with hub $k$ that satisfies (i)--(iv)}. We claim that for each fixed vertex $k$ we can decide ($\dag$) in constant time. Since (obviously) every wedge has a hub, from this it will follow that one can check in $O(n)$ time (as in the worst case scenario there are $n-1$ vertices $k$ to consider) whether there is a wedge $\zz$ that satisfies (i)--(iv), thus finishing the proof of Claim E.

Thus we let $k$ be any vertex distinct from $i$. We recall that we started the proof by constructing the data structures from Lemmas~\ref{lem:ds1}--~\ref{lem:ds3}. Thus in particular it follows from Lemma~\ref{lem:ds1}(Q1) that we can test in constant time whether $k$ is in $\ww$. If not then clearly $k$ cannot be the hub of a wedge $\zz$ that satisfies (i), and so the answer to ($\dag$) is no. If yes, then we use Lemma~\ref{lem:ds2} to find out whether there is a wedge $\zz$ with hub $k$ that satisfies (ii). Again, if the answer is no then the answer to ($\dag$) is no. If the answer is yes, we note that the query from Lemma~\ref{lem:ds2} also returns the (unique) wedge $\zz$ that satisfies (ii). Thus we move on to test (iii) and (iv) with this wedge $\zz$.

By assumption all wedges of size $s$ that are good with tail $j$ have been determined, and so it can be verified in constant time whether (iii) holds for this wedge $\zz$. Again, if the answer is no then the answer to ($\dag$) is no. If the answer is yes it only remains to verify whether (iv) holds, and in view of Lemma~\ref{lem:ds3} this can be done in constant time. If the answer is no then the answer to ($\dag$) is no, and if the answer is yes then the answer to ($\dag$) is yes.
\end{proof}

\section{Proof of Lemmas~\ref{lem:ds1}--\ref{lem:ds3}}\label{sec:proofsds1ds2ds3}

We recall (see the remark immediately after Theorem~\ref{thm:mon}) that we may assume that we are given as input either the extended rotation system of $D$ or its cell structure, and from this we can easily obtain the rotation system of $D$. So we may as well assume that we are given from the onset the rotation system of $D$.

\subsection{Proof of Lemma~\ref{lem:ds1}}\label{sec:ds1}

We note that even though for each vertex $i\in[n]$ its rotation $\cca{i}$ in $D$ is a cyclic permutation, the natural data type to encode it for computational purposes is as a (linear) array $\ora{i}$. Thus in the next proof we assume that this is the data type in which we are given the rotation at each vertex.

\begin{proof}[Proof of Lemma~\ref{lem:ds1}]
To obtain the data structure we construct for each $i\in[n]$ an array $\Ind(i)$ that stores the index in $\ora{i}$ of each vertex distinct from $i$. Clearly for each fixed $i$ the array $\Ind(i)$ can be constructed in $O(n)$ time, and so the entire set of arrays $\Ind(1),\ldots,\Ind(n)$ is constructed in $O(n^2)$ time. We claim that the data structure that consists of this set of arrays allows us to answer queries (Q1) and (Q2) in Lemma~\ref{lem:ds1} in constant time.

Regarding (Q1) we first note that once we have $\Ind(i)$ for some $i=1,\ldots,n$ then every $i$-wedge can be stored using only two integers, namely the indices of the initial vertex and of the final vertex of the wedge. Suppose that we are given a vertex $k$ and an $i$-wedge $\ww$, and want to know whether $k$ is in $\ww$. Since we know the indices in $\Ind(i)$ of $k$ and of the initial and final vertices of $\ww$ it follows that we can answer in constant time whether $k$ is in $\ww$.

Finally, to handle (Q2) suppose that we are given two edges $i\, j$ and $k\,\ell$ and want to know whether they cross each other in $D$. It is not difficult to see that using $\Ind(i), \Ind(j), \Ind(k)$, and $\Ind(\ell)$ we can obtain in constant time the rotation system of the $K_4$ induced by the vertices $i,j,k$ and $\ell$. Since from a rotation system of $K_4$ a quick inspection reveals whether or not a given pair of its edges cross each other (see~\cite{enukyncl,howmany}), it follows that indeed finding out whether these edges cross each other can be determined in constant time, as required.
\end{proof}

\subsection{Proof of Lemma~\ref{lem:ds2}}\label{sec:ds2}

We recall that in the context of Lemma~\ref{lem:ds2} we have a vertex $k$ in a wedge $\ww$, and one is interested in knowing whether there is a $k$-wedge $\zz$ such that $\set{\ww}=\set{\zz}\cup \{k\}$. Equivalently, we want to know whether there is a $k$-wedge $\zz$ such that $\set{\zz}=\set{\ww}\setminus \{k\}$.

\subsubsection{Shortest Wedges that Contain a Given Vertex Set}

The first step towards the construction of a data structure that answers this query in constant time is an observation on the shortest wedges that contain a given vertex set. In order to discuss this properly let $k$ be a vertex, and let $U$ be a set of vertices that does not contain $k$. Needless to say, regardless of $k$ and $U$ there always exist $k$-wedges that contain all the vertices in $U$ (for instance, any $k$-wedge of size $n-1$ certainly contains $U$). We shall use $\kint{k}{U}$ to denote the collection of all the shortest $k$-wedges that contain $U$.

The number of $k$-wedges in $\kint{k}{U}$ greatly depends on $k$ and $U$. For instance, if $|U|\le n-2$ and $U$ happens to be the set of vertices of a $k$-wedge then clearly this $k$-wedge is the only element of $\kint{k}{U}$. In the opposite end if $|U|=n-1$ (that is, $U$ contains all the vertices except for $k$) then $\kint{k}{U}$ consists of all the $k$-wedges of size $n-1$ (and there are $n-1$ such $k$-wedges). In any case, regardless of $k$ and $U$ necessarily $\kint{k}{U}$ contains $O(n)$ wedges, as there are $n-1$ $k$-wedges of each fixed size.

Back to the main discussion, we have a vertex $k$ in a wedge $\ww$. If we let $U:=\set{\ww}\setminus\{k\}$, we are interested in knowing whether there is a $k$-wedge $\zz$ such that $\set{\zz}=U$. As it happens, this problem is closely related to $\kint{k}{U}$ via the following remark, which is not difficult to verify.

\begin{observation}\label{obs:kint}
Let $k$ be a vertex, and let $U$ be a set of at most $n-2$ vertices that does not contain $k$. Let $\zz$ be a $k$-wedge. Then $\set{\zz}=U$ if and only if $\kint{k}{U}=\{\zz\}$ and $|\set{\zz}|=|U|$.
\end{observation}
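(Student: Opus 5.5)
We prove the two directions separately. The argument uses only that the $k$-wedges of a fixed size $r$ are the $n-1$ cyclic windows of length $r$ in $\cca{k}$, and that $\set{\cdot}$ of such a window has exactly $r$ elements.

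For the ``only if'' direction, assume $\set{\zz}=U$. Then $\zz$ is a $k$-wedge containing $U$. Every $k$-wedge containing $U$ has at least $|U|$ vertices, and $\zz$ has exactly $|U|$ vertices. Hence $\zz$ is a shortest such wedge, so $\zz\in\kint{k}{U}$, and in particular $|\set{\zz}|=|U|$.

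For uniqueness, let $\zz'\in\kint{k}{U}$. Then $\zz'$ has size $|U|$ and contains $U$, so $\set{\zz'}=U=\set{\zz}$. Two $k$-wedges are windows of the same cyclic rotation $\cca{k}$. If they have the same size $r\le n-2$ and the same vertex set, they must coincide, because the complement of the window (in $\cca{k}$ minus $k$) is nonempty and determines where the window starts.

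This is the only place the hypothesis $|U|\le n-2$ is used, and it is the one point needing care. When $|U|=n-1$, all $n-1$ windows of full length share the same set, which is exactly why $\kint{k}{U}$ then has $n-1$ elements. To make the step explicit: a window of length $r<n-1$ in a cyclic sequence of $n-1$ distinct symbols is determined by its symbol set. Its initial vertex is the unique element of the set whose cyclic predecessor in $\cca{k}$ lies outside the set. Hence $\kint{k}{U}=\{\zz\}$.

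For the ``if'' direction, assume $\kint{k}{U}=\{\zz\}$ and $|\set{\zz}|=|U|$. Since $\zz\in\kint{k}{U}$, we have $U\subseteq\set{\zz}$. Both sets are finite and have equal cardinality, so $\set{\zz}=U$.

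Note that uniqueness in $\kint{k}{U}$ is not even needed for this direction; only membership and the cardinality condition are used. This completes the plan. The main obstacle is the uniqueness claim in the ``only if'' direction, handled by the window-uniqueness argument for lengths at most $n-2$.
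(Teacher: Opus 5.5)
Your proof is correct and is essentially the routine verification the paper intends; the paper gives no written proof and only remarks that the observation ``is not difficult to verify.'' You also identify the one nontrivial ingredient: for $|U|\le n-2$, a $k$-wedge of size $|U|$ is determined by its vertex set, since its initial vertex is the unique element whose cyclic predecessor in the rotation at $k$ lies outside the set — which is exactly why the case $|U|=n-1$ is excluded.
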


A key result behind the proof of Lemma~\ref{lem:ds2} is that in general we do not need to calculate $\kint{k}{U}$ from scratch for a given set $U$. Instead, we can use the next statement to obtain $\kint{k}{U}$ in a recursive way.

\begin{claim}\label{cla:aux1}
Let $k$ and $\ell$ be distinct vertices, and let $U$ be a set of vertices that contains neither $k$ nor $\ell$. If $\kint{k}{U}$ is known, then $\kint{k}{U\cup\{\ell\}}$ can be obtained in $O(n)$ time.
\end{claim}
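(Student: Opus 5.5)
We work in the index coordinates of the array $\ora{k}$, using $\Ind(k)$ from Lemma~\ref{lem:ds1}: the rotation at $k$ is a cycle of $n-1$ positions, and a $k$-wedge is a cyclic interval $[a,a+r-1]$ (indices mod $n-1$). A $k$-wedge contains a set $U$ exactly when its interval covers all the positions of $U$. Each wedge is stored as the pair (initial index, length), and $\kint{k}{U}$ as a list of such pairs. By the remark preceding Claim~\ref{cla:aux1}, this list has $O(n)$ elements and they all share the same length $r_U$.

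The key structural fact is that a shortest $k$-wedge $\zz$ containing $U\cup\{\ell\}$ is a shortest wedge $\ww\in\kint{k}{U}$ extended by the minimum number of positions needed to reach $\ell$. To prove it, suppose $\zz$ contains $U\cup\{\ell\}$. Then $\zz$ contains $U$, so $\zz$ contains some shortest wedge of $U$ as a subinterval. This holds because among intervals contained in $\zz$ and covering $U$, shrinking $\zz$ from both ends until its endpoints lie in $U$ yields the minimal sub-interval that covers $U$ within $\zz$, namely the complement of the largest gap of $U$ that lies outside $\zz$. One then has to check that this sub-interval has length $r_U$ whenever $\zz$ is optimal. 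I expect this to be the main obstacle: when $U$ has several tied largest gaps, an optimal $\zz$ might conceivably contain a non-minimal $U$-cover. The remedy is to argue directly that the complement of $\zz$ is an arc containing no point of $U\cup\{\ell\}$, so it lies inside a single gap $g$ of $U\cup\{\ell\}$. Optimality forces $g$ to be a largest gap of $U\cup\{\ell\}$. Every gap of $U\cup\{\ell\}$ is either a gap of $U$ or one of the two pieces into which $\ell$ splits the gap $g_\ell$ of $U$ that contains it.

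This suggests an equivalent and cleaner gap formulation, which I would adopt. $\kint{k}{U}$ corresponds bijectively to the largest gaps of $U$, i.e.\ the maximal cyclic runs of positions not in $U$, where a wedge is the complement of such a run. When $U=\emptyset$ the convention is all wedges of length $1$; when $|U|=n-1$ we have $r_U=n-1$ and all $n-1$ rotations are shortest, which is handled separately by the degenerate "gap of length $0$" reading. From the list $\kint{k}{U}$ we read off the set $G$ of largest gaps of $U$, each in $O(1)$, together with their common size $g^*=n-1-r_U$.

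Adding $\ell$ changes only the gap $g_\ell$. If $\ell\in$ some wedge of $\kint{k}{U}$, then $\ell$ lies in a gap of $U$ that is not maximal, or we learn its gap from the list. The difficulty is that $g_\ell$ may not be a largest gap, and then it is unknown from $\kint{k}{U}$ alone. However, in that case splitting $g_\ell$ creates only smaller gaps, so the largest gaps of $U\cup\{\ell\}$ are exactly $G$. Otherwise $g_\ell\in G$, which we detect by scanning $G$ in $O(n)$ time and testing whether $\ell$ lies in each gap using indices. Then $g_\ell$ is replaced by its two pieces, of sizes $p$ and $q$ with $p+q=g^*-1$. If $|G|\ge2$, the new largest gaps are $G\setminus\{g_\ell\}$. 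If $G=\{g_\ell\}$, the new largest gaps are those among the two pieces attaining $\max(p,q)$, together with all other gaps of $U$ of size $\max(p,q)$. Finding those other gaps is the genuinely hard case. For this I would scan the whole rotation at $k$ in $O(n)$ time, which is allowed, since we know all positions of $U\cup\{\ell\}$ only if $U$ is stored. So I would carry along, with $\kint{k}{U}$, a boolean membership array of $U$ over the positions of $\ora{k}$, updated in $O(1)$ per added vertex. With it, a single $O(n)$ sweep recomputes all gaps of $U\cup\{\ell\}$ and hence $\kint{k}{U\cup\{\ell\}}$ as the complements of the largest ones. This sweep gives the $O(n)$ bound outright and in all cases, so I would present it as the actual proof, with the gap discussion serving only to justify correctness.
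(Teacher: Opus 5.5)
Your final algorithm is correct and runs in $O(n)$, but it takes a different route from the paper. You mark $U\cup\{\ell\}$ in $\ora{k}$, sweep once for the maximal runs of unmarked positions, and output the complements of the longest runs. This uses $U$ itself rather than only $\kint{k}{U}$; in fact it never uses $\kint{k}{U}$. The paper instead asserts that every wedge of $\kint{k}{U\cup\{\ell\}}$ contains some wedge of $\kint{k}{U}$ as a subwedge. For each $\vv\in\kint{k}{U}$ it then keeps $\vv$ if $\ell\in\vv$, and otherwise the shorter of the two extensions of $\vv$ up to $\ell$, at $O(1)$ cost per wedge.

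Your extra input is a necessity, because that subwedge step fails. Identify vertices with their positions $0,\dots,11$ in $\ora{k}$ (so $n=13$), and let $U=\{0,6,11\}$ and $\ell=3$.
\begin{itemize}
\item $\kint{k}{U}$ is the single wedge on positions $6,\dots,11,0$.
\item The largest gap of $U\cup\{\ell\}$ is $\{7,\dots,10\}$, so $\kint{k}{U\cup\{\ell\}}$ is the single wedge on $11,0,1,\dots,6$, of length $8$. It does not contain the former wedge.
\item The paper's procedure would return two wedges of length $10$.
\end{itemize}
Moreover, $U'=\{0,6,7,\dots,11\}$ has $\kint{k}{U'}=\kint{k}{U}$, while $\kint{k}{U'\cup\{\ell\}}$ consists of two wedges of length $10$. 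So no procedure that sees only $\kint{k}{U}$ can be correct. In Lemma~\ref{lem:ds2} the set $U=\set{\ww}\setminus\{k\}$ is explicitly available and can be marked in $O(n)$ time. Your version therefore still gives the $O(n^5)$ bound there, and even makes the recursion over ranks unnecessary.

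In the write-up, drop or repair the incremental discussion, since the sweep does not need it.
\begin{itemize}
\item The ``key structural fact'' you open with is false; it is exactly the paper's step, refuted above.
\item Your suspicion about ties points at the wrong culprit. Ties are harmless. The failure occurs when $\ell$ falls into the unique largest gap of $U$ while some other gap of $U$ is at least as long as both resulting pieces.
\item In the case $G=\{g_\ell\}$, the new maximal gap size is $\max(p,q,g_2)$, where $g_2$ is the second-largest gap size of $U$. If $g_2>\max(p,q)$, the new largest gaps are the gaps of $U$ of size $g_2$, and neither piece qualifies, contrary to what you wrote.
\item In Lemma~\ref{lem:ds2} a stored entry can feed more than one pair of the next rank. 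So rebuild the membership array in $O(n)$ rather than updating it in place.
\end{itemize}
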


\begin{proof}
We start by noting that if $|U\cup\{\ell\}|=n-1$ then, as we noted above, $\kint{k}{U\cup\{\ell\}}$ consists of the $n-1$ $k$-wedges of size $n-1$, and so there is nothing else to be done for this case. Thus we may assume that $|U\cup\{\ell\}| \le n-2$.

Clearly, every $k$-wedge in $\kint{k}{U\cup\{\ell\}}$ must contain as a subwedge some $k$-wedge in $\kint{k}{U}$. Thus it suffices to consider one by one each $k$-wedge $\vv$ in $\kint{k}{U}$, and ($*$) {\sl find the shortest $k$-wedges that contain $\set{\vv}\cup\{\ell\}$}. Indeed, $\kint{k}{U\cup\{\ell\}}$ will simply consist of the shortest $k$-wedges we encountered as we ran this process for all $\vv\in\kint{k}{U}$. Since $\kint{k}{U}$ consists of $O(n)$ $k$-wedges, in order to prove the claim it suffices to show that if $\vv$ and $\ell$ are given, then ($*$) can be performed in constant time. In particular, as we are about to see, running ($*$) when $\vv$ and $\ell$ are given yields that there are at most two shortest $k$-wedges that contain $\set{\vv}\cup\{\ell\}$.

Suppose then that $\vv$ and $\ell$ are given. We note that if $\ell$ is already in $\vv$ (a test that gets done in constant time, in view of Lemma~\ref{lem:ds1}(Q1)) then $\vv$ is the only shortest $k$-wedge that contains $\set{\vv}$ and $\ell$, and so we are done. 

Suppose finally that vertex $\ell$ is not in $\vv$. Let $i$ (respectively, $j$) be the first (respectively, last) vertex in $\vv$. Let $\vv_{i,\ell}$ be the $k$-wedge with initial vertex $i$ and final vertex $\ell$, and let $\vv_{\ell,j}$ be the $k$-wedge with initial vertex $\ell$ and final vertex $j$. These two $k$-wedges clearly contain $\set{\vv}$ and $\ell$. If one of them is shorter than the other then $\kint{k}{\set{\vv}\cup\{\ell\}}$ consists solely of this $k$-wedge, and if they have the same size then $\kint{k}{\set{\vv}\cup\{\ell\}}$ consists of these two $k$-wedges. We finally note that $\vv_{i,\ell}$ and $\vv_{\ell,j}$ get determined in constant time and their sizes are also calculated in constant time, and so $\kint{k}{\set{\vv}\cup\{\ell\}}$ gets determined in constant time.
\end{proof}

\subsubsection{Proof of Lemma~\ref{lem:ds2}}


In the proof of the lemma we use the following terminology. If $\ww$ is a wedge and $k$ is a vertex in $\ww$ then we say that $(\ww,k)$ is a {\em valid pair}. The number of vertices in $\ww$ is the {\em rank} of the pair $(\ww,k)$.

\begin{proof}[Proof of Lemma~\ref{lem:ds2}]

We claim that in order to prove Lemma~\ref{lem:ds2} it suffices to show that ($*$) {\sl in $O(n^5)$ time we can construct a data structure that stores $\kint{k}{\set{\ww}\setminus\{k\}}$ for all valid pairs $(\ww,k)$, and together with $\kint{k}{\set{\ww}\setminus\{k\}}$ also stores the size of each $k$-wedge in this collection} (since they are the shortest $k$-wedges that contain $\ww\setminus\{k\}$, they all have the same size).

To see that having this data structure proves Lemma~\ref{lem:ds2}, suppose that it has been built, let $\ww$ be a wedge, and let $k$ be a vertex in $\ww$. In the context of Lemma~\ref{lem:ds2} we are asked the following question: is there a $k$-wedge $\zz$ such that $\set{\ww}=\set{\zz}\cup \{k\}$ (equivalently, $\set{\zz}=\set{\ww}\setminus \{k\}$)? Letting $U:=\set{\ww}\setminus\{k\}$, this is equivalent to asking: is there a $k$-wedge $\zz$ such that $\set{\zz}=U$?

In view of Observation~\ref{obs:kint}, this is equivalent to asking: is there a $k$-wedge $\zz$ such that $\kint{k}{U}=\{\zz\}$ and $|\set{\zz}|=|U|$? Retrieving from the constructed data structure (in constant time) $\kint{k}{\set{\ww}\setminus\{k\}}$ and the size of each $k$-wedge in this collection answers this question: indeed, it suffices to check whether $\kint{k}{\set{\ww}\setminus\{k\}}$ consists of a single $k$-wedge and, if so, whether the size of this $k$-wedge is the same as the size of $U$. Thus the query can be answered in constant time, as claimed. Moreover, if the answer is yes then we have obtained $\zz$ as well, as it is also claimed in the lemma. Thus in order to prove the lemma it suffices to show ($*$). 

To prove ($*$) we start by noting that if $(\ww,k)$ is a valid pair of rank $2$ (that is, $\ww$ has size $2$), then $\kint{k}{\ww\setminus\{k\}}$ consists of a single $k$-wedge which can be computed in constant time. Indeed, in this case $\set{\ww\setminus\{k\}}$ has only a vertex $\ell$,  and $\kint{k}{\{\ell\}}$ consists of the (unique) $k$-wedge of size $1$ whose only vertex is $\ell$. 

Now since there are $O(n^2)$ wedges of size $2$ and for each such wedge there are two possible ways to choose $k$, we conclude that $\kint{k}{\set{\ww}\setminus\{k\}}$ can be computed in $O(n^2)$ time for all valid pairs $(\ww,k)$ of rank $2$.

To construct the data structure claimed in ($*$) for valid pairs of rank $>2$ we proceed as follows. We show that ($\ddag$) {\em if the data structure has been built for all valid pairs of rank $r$ for some $2 \le r \le n-2$, then $\kint{k}{\ww\setminus\{k\}}$ can be found in $O(n)$ time for each valid pair $(\ww,k)$ of rank $r+1$}. Since there are $O(n^2)$ wedges of size $r+1$ then there are $O(n^3)$ valid pairs $(\ww,k)$ of rank $r+1$, and so from ($\ddag$) it will follow that $\kint{k}{\ww\setminus\{k\}}$ can be determined for {\em all} valid pairs $(\ww,k)$ of rank $r+1$ in $O(n^4)$ time. Since there are $O(n)$ possible values of $r$, this will show that the whole data structure can be constructed in $O(n^5)$ time, thus proving ($*$).

To prove ($\ddag$) we assume that the data structure has been built for all valid pairs of rank $r$ for some $2\le r \le n-2$, and let $(\ww,k)$ be a valid pair of rank $r+1$. Let $i$ be the hub of $\ww$ and let $j$ be its initial vertex, so that $\ww=\H{i}{j}{r+1}$. We let $j'$ be the vertex that succeeds $j$ in $\H{i}{j}{r+1}$, and let $j''$ be the last vertex of $\H{i}{j}{r+1}$. Thus the $i$-wedge $\H{i}{j}{r+1}$ is of the form $\perm{j,j',\ldots,j''}$. Now since vertex $k$ is in $\H{i}{j}{r+1}$ then it is in $\H{i}{j}{r}$ unless $k=j''$, and it is in $\H{i}{j'}{r}$ unless $k=j$. In particular $k$ is either in $\H{i}{j}{r}$ or in $\H{i}{j'}{r}$. We assume that $k$ is in $\H{i}{j}{r}$, as the latter possibility is handled in a totally analogous manner.

Recall that our goal is to obtain $\kint{k}{\ww\setminus\{k\}}$. Now $\ww\setminus\{k\}=\H{i}{j}{r+1}\setminus\{k\}$ is the union of $\H{i}{j}{r}\setminus\{k\}$ and $\{j''\}$. Since the data structure has been constructed for all valid pairs of rank $r$ then $\kint{k}{\H{i}{j}{r}\setminus\{k\}}$ is known, and so it follows from Claim~\ref{cla:aux1} that $\kint{k}{\H{i}{j}{r}\setminus\{k\}\cup\{j''\}}=\kint{k}{\H{i}{j}{r+1}\setminus\{k\}}=\kint{k}{\ww\setminus\{k\}}$ can be obtained in $O(n)$ time, as claimed in ($\ddag$).
\end{proof}

\subsection{Proof of Lemma~\ref{lem:ds3}}\label{sec:ds3}

Before we proceed to the proof of the lemma let us define and discuss a function that is used in our arguments, namely the function $\kl$ (for ``closest'').

This function is defined as follows. Let $(i,k,\ell)$ be a triple of distinct vertices. We say that a vertex $m$ is $(i,k,\ell)$-{\em active} if the edge $\ell\,m$ crosses the edge $i\,k$. If no $(i,k,\ell)$-active vertices exist, then we let $\kl(i,k,\ell)=0$. Otherwise, we traverse the cyclic permutation $\cca{i}$ in reverse order starting from $\ell$ until we find a vertex $m$ such that the edge $\ell\, m$ crosses $i\,k$. We then let $\kl(i,k,\ell)$ be the number of vertices we needed to explore, including $m$, to reach $m$. Thus in this case $\kl(i,k,\ell)$ can be any integer in $\{1,\ldots,n-2\}$ (it cannot be greater than $n-2$ because $\cca{i}$ has $n-1$ vertices, and so starting the traversal of $\cca{i}$ from vertex $\ell$ there are $n-2$ vertices to explore).

Consider for instance the triple $(3,1,6)$ in the drawing of $K_7$ in Figure~\ref{fig:150}. In order to find $\kl(3,1,6)$ we start by getting $\cca{3}$, which in this case is $\cycper{2,6,5,7,4,1}$. To obtain $\kl(3,1,6)$ we start traversing this cyclic permutation in reverse order starting from $6$. Starting from $6$ the first vertex to consider is $2$. Since the edge $6\, 2$ does not cross the edge $3\, 1$ we move on to consider vertex $1$. Since $6\, 1$ (obviously) does not cross $3\, 1$ either, we move on to consider vertex $4$. Now $6\, 4$ does not cross $3\, 1$ either, and so next we consider vertex $7$. Again, $6\, 7$ does not cross $3\, 1$, and so we move on to vertex $5$. The edge $5\, 1$ does cross $3\, 1$, and so we stop the process and count the number of vertices we considered until we reached vertex $5$. We considered $2,1,4,7,5$, that is, five vertices in total. Therefore in this example we have $\kl(3,1,6)=5$.

\begin{proof}
By assumption the data structure from Lemma~\ref{lem:ds1} has been constructed. Therefore for each fixed triple $(i,k,\ell)$ we can compute $\kl(i,k,\ell)$ by performing $O(n)$ queries (of the type (Q2) in Lemma~\ref{lem:ds1}), each of which is done in constant time. Thus for each fixed triple $(i,k,\ell)$ we can compute $\kl(i,k,\ell)$ in $O(n)$ time. We store this information in a data structure for later use, and move on with the rest of the proof.

As in the proof of Lemma~\ref{lem:ds2}, if $\ww$ is a wedge and $k$ is a vertex in $\ww$ then the pair $(\ww,k)$ is {\em valid}, and the size of $\ww$ is the {\em rank} of the valid pair. Let $(\ww,k)$ be a valid pair, and let $i$ be the hub of $\ww$. We say that $(\ww,k)$ is {\em good} if no edge with both endvertices in $\ww$ crosses the edge $i\, k$. Otherwise we say that $(\ww,k)$ is {\em bad}. 

Using this terminology, Lemma~\ref{lem:ds3} claims the existence of a data structure that contains for each valid pair $(\ww,k)$ the information of whether it is good or bad. 

We construct this data structure using dynamic programming. We first note that for valid pairs of rank $2$ there is nothing to do. Indeed, suppose that $(\ww,k)$ is a valid pair where $\ww$ has size $2$ and its hub vertex is $i$. Then there is only one vertex in $\ww$ other than $k$, and so obviously no edge with both endvertices in $\ww$ can cross the edge $i\,k$. 

To construct the data structure for all valid pairs of rank $>2$ we show that ($\ddag$) {\em if the data structure has been constructed for all valid pairs of rank $r$ for some $2 \le r \le n-2$, then for each valid pair of rank $r+1$ we can determine whether it is good or bad in constant time}. It is easy to verify that there are in total $O(n^4)$ valid pairs of rank greater than $2$ (that is, greater than $2$ and smaller than $n$), and so ($\ddag$) implies that the whole data structure can be constructed in time $O(n^4)$, as claimed.

To prove ($\ddag$) we assume that the data structure has been constructed for all valid pairs of rank $r$ for some $r$, $2 \le r \le n-2$, and let $(\ww,k)$ be a valid pair of rank $r+1$. Let $i$ be the hub vertex of $\ww$, and let $j$ be the first vertex of $\ww$, so that $\ww=\H{i}{j}{r+1}$. We let $j'$ be the vertex that succeeds $j$ in $\H{i}{j}{r+1}$, and let $j''$ be the last vertex of $\H{i}{j}{r+1}$. Thus $\H{i}{j}{r+1}$ is of the form $\perm{j,j',\ldots,j''}$. Since by assumption vertex $k$ is in $\H{i}{j}{r+1}$ then it is in $\H{i}{j}{r}$ unless $k=j''$, and it is in $\H{i}{j'}{r}$ unless $k=j$. In particular $k$ is either in $\H{i}{j}{r}$ or in $\H{i}{j'}{r}$. We assume that $k$ is in $\H{i}{j}{r}$, as the latter possibility is handled in a totally analogous manner.

In order to show ($\ddag$) we need to show that it is possible to decide in constant time whether there is an edge with both endvertices in $\H{i}{j}{r+1}$ that crosses the edge $i\,k$. Now there are two possibilities for an edge with both endvertices in $\H{i}{j}{r+1}$: either (i) it has both endvertices in $\H{i}{j}{r}$; or (ii) it has one endvertex in $\H{i}{j}{r}$ and its other endvertex is $j''$. 

Since $k$ is in $\H{i}{j}{r}$ and the data structure has been constructed for valid pairs of rank $r$ it follows that we can tell in constant time whether some edge that satisfies (i) crosses the edge $i\, k$. Regarding the edges that satisfy (ii) we look at $\kl(i,k,j'')$. If it is $0$ then no edge $j''$ as an endvertex crosses $i\,k$, and so we are done. Otherwise $\kl(i,k,j'')$ is some positive integer $t$. The definition of $\kl$ implies that an edge satisfying (ii) crosses $i\,k$ if and only if $t \le r$. Since this inequality can obviously be verified in constant time, we conclude that we can tell in constant time whether some edge with both endvertices in $\H{i}{j}{r+1}$ crosses the edge $i\, k$. That is, we can decide in constant time whether $(\ww,k)$ is good or bad, and so ($\ddag$) follows.
\end{proof}


\section*{Acknowledgments} 
We thank Bernardo M. \'Abrego, Silvia Fern\'andez-Merchant, and Pedro Ramos for valuable discussions. We thank C\'esar Hern\'andez-V\'elez for his valuable input on an earlier version of this paper. The first author was supported by the ESF EUROCORES programme EuroGIGA-ComPoSe, Austrian Science Fund (FWF): I 648-N18.. The second author was supported by the Austrian Science Fund (FWF): P23629-N18 ``Combinatorial Problems on Geometric Graphs''. The third author was supported by a Schr\"odinger fellowship of the Austrian Science Fund (FWF): J-3847-N35.

\bibliographystyle{amsplain}
\bibliography{bibliography}

\end{document}

%% file: 380b.pdf_t
\begin{picture}(0,0)%
\includegraphics{380b.pdf}%
\end{picture}%
\setlength{\unitlength}{4144sp}%
\begingroup\makeatletter\ifx\SetFigFont\undefined%
\gdef\SetFigFont#1#2#3#4#5{%
  \reset@font\fontsize{#1}{#2pt}%
  \fontfamily{#3}\fontseries{#4}\fontshape{#5}%
  \selectfont}%
\fi\endgroup%
\begin{picture}(11415,7442)(26581,-11252)
\put(26596,-7891){\makebox(0,0)[lb]{\smash{{\SetFigFont{20}{24.0}{\familydefault}{\mddefault}{\updefault}{\color[rgb]{0,0,0}$\ta{4}$}%
}}}}
\put(28396,-7891){\makebox(0,0)[lb]{\smash{{\SetFigFont{20}{24.0}{\familydefault}{\mddefault}{\updefault}{\color[rgb]{0,0,0}$\ta{7}$}%
}}}}
\put(36091,-7891){\makebox(0,0)[lb]{\smash{{\SetFigFont{20}{24.0}{\familydefault}{\mddefault}{\updefault}{\color[rgb]{0,0,0}$\ta{1}$}%
}}}}
\put(37981,-7891){\makebox(0,0)[lb]{\smash{{\SetFigFont{20}{24.0}{\familydefault}{\mddefault}{\updefault}{\color[rgb]{0,0,0}$\ta{6}$}%
}}}}
\put(32176,-7891){\makebox(0,0)[lb]{\smash{{\SetFigFont{20}{24.0}{\familydefault}{\mddefault}{\updefault}{\color[rgb]{0,0,0}$\ta{5}$}%
}}}}
\put(34336,-7891){\makebox(0,0)[lb]{\smash{{\SetFigFont{20}{24.0}{\familydefault}{\mddefault}{\updefault}{\color[rgb]{0,0,0}$\ta{2}$}%
}}}}
\put(30466,-8791){\makebox(0,0)[lb]{\smash{{\SetFigFont{20}{24.0}{\familydefault}{\mddefault}{\updefault}{\color[rgb]{0,0,0}$\ta{3}$}%
}}}}
\end{picture}%

%% file: 445c.pdf_t
\begin{picture}(0,0)%
\includegraphics{445c.pdf}%
\end{picture}%
\setlength{\unitlength}{4144sp}%
\begingroup\makeatletter\ifx\SetFigFont\undefined%
\gdef\SetFigFont#1#2#3#4#5{%
  \reset@font\fontsize{#1}{#2pt}%
  \fontfamily{#3}\fontseries{#4}\fontshape{#5}%
  \selectfont}%
\fi\endgroup%
\begin{picture}(48505,13053)(4666,-21041)
\put(11926,-20896){\makebox(0,0)[lb]{\smash{{\SetFigFont{12}{14.4}{\familydefault}{\mddefault}{\updefault}{\color[rgb]{0,0,0}$\Somea$}%
}}}}
\put(45271,-20896){\makebox(0,0)[lb]{\smash{{\SetFigFont{12}{14.4}{\familydefault}{\mddefault}{\updefault}{\color[rgb]{0,0,0}$\Somec$}%
}}}}
\put(4681,-12571){\makebox(0,0)[lb]{\smash{{\SetFigFont{12}{14.4}{\familydefault}{\mddefault}{\updefault}{\color[rgb]{0,0,0}$\ta{i_{t}}$}%
}}}}
\put(28441,-20986){\makebox(0,0)[lb]{\smash{{\SetFigFont{12}{14.4}{\familydefault}{\mddefault}{\updefault}{\color[rgb]{0,0,0}$\Someb$}%
}}}}
\put(38161,-12661){\makebox(0,0)[lb]{\smash{{\SetFigFont{12}{14.4}{\familydefault}{\mddefault}{\updefault}{\color[rgb]{0,0,0}$\ta{i_{t}}$}%
}}}}
\put(21376,-12661){\makebox(0,0)[lb]{\smash{{\SetFigFont{12}{14.4}{\familydefault}{\mddefault}{\updefault}{\color[rgb]{0,0,0}$\ta{i_{t}}$}%
}}}}
\put(31986,-11855){\makebox(0,0)[lb]{\smash{{\SetFigFont{12}{14.4}{\familydefault}{\mddefault}{\updefault}{\color[rgb]{0,0,0}$\ta{i_{s+1}}$}%
}}}}
\put(32106,-9838){\makebox(0,0)[lb]{\smash{{\SetFigFont{12}{14.4}{\familydefault}{\mddefault}{\updefault}{\color[rgb]{0,0,0}$\ta{i_{s+1}}$}%
}}}}
\put(9631,-14371){\makebox(0,0)[lb]{\smash{{\SetFigFont{12}{14.4}{\familydefault}{\mddefault}{\updefault}{\color[rgb]{0,0,0}$\ta{i_{s}}$}%
}}}}
\put(14311,-14416){\makebox(0,0)[lb]{\smash{{\SetFigFont{12}{14.4}{\familydefault}{\mddefault}{\updefault}{\color[rgb]{0,0,0}$\ta{i_{r}}$}%
}}}}
\put(19351,-14326){\makebox(0,0)[lb]{\smash{{\SetFigFont{12}{14.4}{\familydefault}{\mddefault}{\updefault}{\color[rgb]{0,0,0}$\ta{i_{r-1}}$}%
}}}}
\put(26326,-14461){\makebox(0,0)[lb]{\smash{{\SetFigFont{12}{14.4}{\familydefault}{\mddefault}{\updefault}{\color[rgb]{0,0,0}$\ta{i_{s}}$}%
}}}}
\put(31051,-14506){\makebox(0,0)[lb]{\smash{{\SetFigFont{12}{14.4}{\familydefault}{\mddefault}{\updefault}{\color[rgb]{0,0,0}$\ta{i_{r}}$}%
}}}}
\put(36046,-14416){\makebox(0,0)[lb]{\smash{{\SetFigFont{12}{14.4}{\familydefault}{\mddefault}{\updefault}{\color[rgb]{0,0,0}$\ta{i_{r-1}}$}%
}}}}
\put(43111,-14461){\makebox(0,0)[lb]{\smash{{\SetFigFont{12}{14.4}{\familydefault}{\mddefault}{\updefault}{\color[rgb]{0,0,0}$\ta{i_{s}}$}%
}}}}
\put(47791,-14506){\makebox(0,0)[lb]{\smash{{\SetFigFont{12}{14.4}{\familydefault}{\mddefault}{\updefault}{\color[rgb]{0,0,0}$\ta{i_{r}}$}%
}}}}
\put(52831,-14416){\makebox(0,0)[lb]{\smash{{\SetFigFont{12}{14.4}{\familydefault}{\mddefault}{\updefault}{\color[rgb]{0,0,0}$\ta{i_{r-1}}$}%
}}}}
\put(38071,-8836){\makebox(0,0)[lb]{\smash{{\SetFigFont{12}{14.4}{\familydefault}{\mddefault}{\updefault}{\color[rgb]{0,0,0}$\ta{i_{s+1}}$}%
}}}}
\end{picture}%

%% file: bibliography.bib
@inproceedings{ahpsv,
  author = {Oswin Aichholzer and Thomas Hackl and Alexander Pilz and Gelasio Salazar and Birgit Vogtenhuber},
  title = {{{Deciding monotonicity of good drawings of the complete graph}}},
  booktitle = {Proc. XVI Spanish Meeting on Computational Geometry (EGC 2015)},
  pages = {33--36},
  year = {2015},
  category = {3b},
  thackl_label = {47C},
  originalfile = {/geometry/cggg.bib}
}

@book{schaefer-2018-cng,
  title={Crossing Numbers of Graphs},
  author={Schaefer, Marcus},
  year={2018},
  publisher={CRC Press},
  doi={10.1201/9781315152394}
,
 note = {\doi{10.1201/9781315152394}}
}

@InProceedings{complexity_kyncl,
author="Kyn{\v{c}}l, Jan",
editor="Hong, Seok-Hee
and Nishizeki, Takao
and Quan, Wu",
title="The Complexity of Several Realizability Problems for Abstract Topological Graphs",
booktitle="Graph Drawing",
year="2008",
publisher="Springer Berlin Heidelberg",
address="Berlin, Heidelberg",
pages="137--158",
 note = {\doi{10.1007/978-3-540-77537-9_16}}
}

@article{extendingm,
  author       = {Kyn\v{c}l, Jan and Soukup, Jan},
  title        = {Extending simple monotone drawings},
 fjournal = {The Electronic Journal of Combinatorics},
 journal = {Electron. J. Comb.},
 issn = {1077-8926},
 volume = {33},
 number = {3},
 pages = {22 pages},
 note = {Id/No.~p3.4},
 year = {2026},
}

@article{howmany,
 author = {Pach, J{\'a}nos and T{\'o}th, G{\'e}za},
 title = {How many ways can one draw a graph?},
 fjournal = {Combinatorica},
 journal = {Combinatorica},
 issn = {0209-9683},
 volume = {26},
 number = {5},
 pages = {559--576},
 year = {2006},
 doi = {10.1007/s00493-006-0032-z},
 zbMATH = {5150174},
 Zbl = {1121.05006}
,
 note = {\doi{10.1007/s00493-006-0032-z}}
}

@article{enukyncl,
 author = {Kyn{\v{c}}l, Jan},
 title = {Enumeration of simple complete topological graphs},
 fjournal = {European Journal of Combinatorics},
 journal = {Eur. J. Comb.},
 issn = {0195-6698},
 volume = {30},
 number = {7},
 pages = {1676--1685},
 year = {2009},
 doi = {10.1016/j.ejc.2009.03.005},
 zbMATH = {5640346},
 Zbl = {1228.05175}
,
 note = {\doi{10.1016/j.ejc.2009.03.005}}
}

@article{gioan,
 author = {Gioan, Emeric},
 title = {Complete graph drawings up to triangle mutations},
 fjournal = {Discrete \& Computational Geometry},
 journal = {Discrete Comput. Geom.},
 issn = {0179-5376},
 volume = {67},
 number = {4},
 pages = {985--1022},
 year = {2022},
 doi = {10.1007/s00454-021-00339-8},
 zbMATH = {7526455},
 Zbl = {1489.05107}
,
 note = {\doi{10.1007/s00454-021-00339-8}}
}

@article{arroyogioan,
 author = {Arroyo, Alan and McQuillan, Dan and Richter, R. Bruce and Salazar, Gelasio},
 title = {Drawings of {{\(K_n\)}} with the same rotation scheme are the same up to {Reidemeister} moves ({Gioan}'s theorem)},
 fjournal = {The Australasian Journal of Combinatorics},
 journal = {Australas. J. Comb.},
 issn = {1034-4942},
 volume = {67},
 pages = {131--144},
 year = {2017},
 url = {ajc.maths.uq.edu.au/pdf/67/ajc_v67_p131.pdf},
 zbMATH = {6813586},
 Zbl = {1375.05180}
}

@article{schaeferdetour,
 author = {Schaefer, Marcus},
 title = {Taking a detour; or, {Gioan}'s theorem, and pseudolinear drawings of complete graphs},
 fjournal = {Discrete \& Computational Geometry},
 journal = {Discrete Comput. Geom.},
 volume = {66},
 number = {1},
 pages = {12--31},
 year = {2021},
 doi = {10.1007/s00454-021-00296-2},
 zbMATH = {7367362},
 Zbl = {1467.05178}
,
 note = {\doi{10.1007/s00454-021-00296-2}}
}

@incollection {fpss,
    AUTHOR = {Fulek, Radoslav and Pelsmajer, Michael J. and Schaefer, Marcus
              and {\v{S}}tefankovi{\v{c}}, Daniel},
     TITLE = {Hanani-{T}utte, monotone drawings, and level-planarity},
 BOOKTITLE = {Thirty essays on geometric graph theory},
     PAGES = {263--287},
 PUBLISHER = {Springer},
      YEAR = {2013},
       DOI = {10.1007/978-1-4614-0110-0_14},
       URL = {http://dx.doi.org/10.1007/978-1-4614-0110-0_14},
 note = {\doi{10.1007/978-1-4614-0110-0_14}}
}

@article{kyncl_monotone,
  author    = {Martin Balko and
               Radoslav Fulek and
               Jan Kyn{\v{c}}l},
  title     = {Crossing Numbers and Combinatorial Characterization of Monotone Drawings
               of {$K_n$}},
  journal   = {Discrete Comput. Geom.},
  volume    = {53},
  number    = {1},
  pages     = {107--143},
  year      = {2015},
  url       = {http://dx.doi.org/10.1007/s00454-014-9644-z},
  doi       = {10.1007/s00454-014-9644-z},
  bibsource = {dblp computer science bibliography, http://dblp.org},
 note = {\doi{10.1007/s00454-014-9644-z}}
}

@article{bergold2025plane,
  author    = {Bergold, Helena and Felsner, Stefan and Reddy, Meghana M. and Scheucher, Manfred},
  title     = {Plane {Hamiltonian} Cycles in Convex Drawings},
  journal   = {Discrete Comput. Geom.},
  year      = {2025},
  doi       = {10.1007/s00454-025-00752-3},
  url       = {https://doi.org/10.1007/s00454-025-00752-3},
 note = {\doi{10.1007/s00454-014-9644-z}}
}

@article{shellable_drawings,
  author    = {Bernardo M. {\'{A}}brego and
               Oswin Aichholzer and
               Silvia Fern{\'{a}}ndez{-}Merchant and
               Pedro Ramos and
               Gelasio Salazar},
  title     = {Shellable Drawings and the Cylindrical Crossing Number of {$K_n$}},
  journal   = {Discrete Comput. Geom.},
  volume    = {52},
  number    = {4},
  pages     = {743--753},
  year      = {2014},
  url       = {http://dx.doi.org/10.1007/s00454-014-9635-0},
  doi       = {10.1007/s00454-014-9635-0},
  bibsource = {dblp computer science bibliography, http://dblp.org}
,
 note = {\doi{10.1007/s00454-014-9635-0}}
}

@inproceedings{shootingstars,
  author       = {Oswin Aichholzer and
                  Alfredo Garc{\'{\i}}a and
                  Irene Parada and
                  Birgit Vogtenhuber and
                  Alexandra Weinberger},
  title        = {Shooting Stars in Simple Drawings of ${K}_{m,n}$},
  booktitle    = {{GD} 2022},
  series       = {LNCS},
  volume       = {13764},
  pages        = {49--57},
  publisher    = {Springer},
  year         = {2022},
  doi          = {10.1007/978-3-031-22203-0\_5}
,
 note = {\doi{10.1007/978-3-031-22203-0\_5}}
}

@inproceedings{fulekvargas,
  author       = {Radoslav Fulek and
                  Andres J. Ruiz{-}Vargas},
  title        = {Topological graphs: empty triangles and disjoint matchings},
  booktitle    = {SoCG 2013},
  pages        = {259--266},
  publisher    = {{ACM}},
  year         = {2013},
  doi          = {10.1145/2462356.2462394},
 note = {\doi{10.1145/2462356.2462394}}
}

@article{tangled-thrackle,
  author       = {Andres J. Ruiz{-}Vargas and
                  Andrew Suk and
                  Csaba D. T{\'{o}}th},
  title        = {Disjoint edges in topological graphs and the tangled-thrackle conjecture},
  journal      = {Eur. J. Comb.},
  volume       = {51},
  pages        = {398--406},
  year         = {2016},
  doi          = {10.1016/J.EJC.2015.07.004}
,
 note = {\doi{10.1016/J.EJC.2015.07.004}}
}

@article{twisted,
  author       = {Oswin Aichholzer and
                  Alfredo Garc{\'{\i}}a and
                  Javier Tejel and
                  Birgit Vogtenhuber and
                  Alexandra Weinberger},
  title        = {Twisted Ways to Find Plane Structures in Simple Drawings of Complete
                  Graphs},
  journal      = {Discret. Comput. Geom.},
  volume       = {71},
  number       = {1},
  pages        = {40--66},
  year         = {2024},
  doi          = {10.1007/S00454-023-00610-0},
 note = {\doi{10.1007/S00454-023-00610-0}}
}

@Article{kynclimproved,
 Author = {Kyn{\v{c}}l, Jan},
 Title = {Improved enumeration of simple topological graphs},
 FJournal = {Discrete \& Computational Geometry},
 Journal = {Discrete Comput. Geom.},
 ISSN = {0179-5376},
 Volume = {50},
 Number = {3},
 Pages = {727--770},
 Year = {2013},
 DOI = {10.1007/s00454-013-9535-8},
 zbMATH = {6223066},
 Zbl = {1275.05027}
,
 note = {\doi{10.1007/s00454-013-9535-8}}
}

@article {sukzeng,
    AUTHOR = {Suk, A. and Zeng, J.},
     TITLE = {Unavoidable Patterns in Complete Simple Topological Graphs},
   JOURNAL = {Discrete Comput. Geom.},
  FJOURNAL = {Discrete and Computational Geometry},
    VOLUME = {73},
      YEAR = {2025},
     PAGES = {79--91},
 note = {\doi{10.1007/s00454-024-00658-6}}
}

@article {pst,
    AUTHOR = {Pach, J\'anos and Solymosi, J\'ozsef and T\'oth, G\'eza},
     TITLE = {Unavoidable configurations in complete topological graphs},
   JOURNAL = {Discrete Comput. Geom.},
  FJOURNAL = {Discrete \& Computational Geometry. An International Journal
              of Mathematics and Computer Science},
    VOLUME = {30},
      YEAR = {2003},
    NUMBER = {2},
     PAGES = {311--320},
 note = {\doi{10.1007/s00454-003-0012-9}}
}
